\documentclass[
twocolumn,
reprint,
superscriptaddress,
nofootinbib,
nobibnotes,
amsmath,
amssymb,
aps,
longbibliography,
prb,
showkeys,
]{revtex4-2}
 
\usepackage[latin1]{inputenc}
\usepackage{amsmath}
\usepackage{amsfonts}
\usepackage{hyperref}
\usepackage{amssymb}
\usepackage{graphicx}
\usepackage{hyperref}
\usepackage{colortbl}
\usepackage{subfigure}
\usepackage{svg}
\usepackage{amsthm}
\usepackage{thmtools}
\usepackage[normalem]{ulem}
\usepackage{dsfont}
\usepackage{circuitikz}
\usepackage{cases}

\usepackage[T1]{fontenc}
\usepackage{mathptmx}
\usepackage{bbold}
\usepackage{scalerel}
\usepackage{xcolor}

\usepackage{mathtools}


\declaretheorem[style=definition]{definition} 

\numberwithin{equation}{section}

\declaretheorem[style=plain]{lemma}

\declaretheorem[style=definition]{remark}

\newcommand{\hyref}[1]{\hyperref[#1]{\ref{#1}}}
\newcommand{\dd}{\mathrm{d}}

\newcommand{\orange}[1]

\renewcommand{\thesection}{\arabic{section}}
\renewcommand{\thesubsection}{.\arabic{subsection}}

\newcommand{\thenewsubsection}{\thesection.\arabic{subsection}}
\newcommand{\thenewsubsubsection}{\thesection\thesubsection.\arabic{subsubsection}}

\makeatletter
\def\@hangfrom@section#1#2#3{\@hangfrom{#1#2}#3}
\def\@hangfroms@section#1#2{#1#2}
\makeatother

\usepackage{titlesec}

\titleformat{\section}  
  {\fontsize{12}{14}\bfseries} 
  {\thesection} 
  {1em} 
  {\centering} 
  [] 

\titleformat{\subsection}  
  {\fontsize{10}{12}\bfseries} 
  {\thenewsubsection} 
  {2em} 
  {\centering} 
  [] 

\titleformat{\subsubsection}  
  {\fontsize{10}{12}\itshape} 
  {\thenewsubsubsection} 
  {2em} 
  {\centering} 
  [] 
\begin{document}

\title{A simple mathematical theory for Simple Volatile Memristors and their spiking circuits}

\author{T. M. Kamsma}
\affiliation{Mathematical Institute, Utrecht University, Budapestlaan 6, 3584 CD Utrecht, The Netherlands}
\affiliation{Institute for Theoretical Physics, Utrecht University,  Princetonplein 5, 3584 CC Utrecht, The Netherlands}
\author{R. van Roij}
\affiliation{Institute for Theoretical Physics, Utrecht University,  Princetonplein 5, 3584 CC Utrecht, The Netherlands}
\author{C. Spitoni}
\affiliation{Mathematical Institute, Utrecht University, Budapestlaan 6, 3584 CD Utrecht, The Netherlands}

\date{\today}
\begin{abstract}
In pursuit of neuromorphic (brain-inspired) devices, memristors (memory-resistors) have emerged as effective components for emulating neuronal circuitry. Here we formally define a class of Simple Volatile Memristors (SVMs) based on a simple conductance equation of motion from which we build a simple mathematical theory on the dynamics of isolated SVMs and SVM-based spiking circuits. Notably, SVMs include various fluidic iontronic devices that have recently garnered significant interest due to their unique quality of operating within the same medium as the brain. Specifically we show that symmetric SVMs produce non self-crossing current-voltage hysteresis loops, while asymmetric SVMs produce self-crossing loops. Additionally, we derive a general expression for the enclosed area in a loop, providing a relation between the voltage frequency and the SVM memory timescale. These general results are shown to materialise in physical finite-element calculations of microfluidic memristors. An SVM-based circuit has been proposed that exhibits all-or-none and tonic neuronal spiking. We generalise and analyse this spiking circuit, characterising it as a two-dimensional dynamical system. Moreover, we demonstrate that stochastic effects can induce novel neuronal firing modes absent in the deterministic case. Through our analysis, the circuit dynamics are well understood, while retaining its explicit link with the physically plausible underlying system.
\end{abstract}
\keywords{Memristors, $I-V$ hysteresis loops, spiking circuit, dynamical spiking system, stochastic spiking}
\maketitle

\section{Introduction}

The rapid advancement and widespread deployment of computing devices, especially in the realm of Artificial Intelligence, have led to an exponential and unsustainable increase in energy consumption, posing a critical challenge for future advancements of computation \cite{jones2018information}. Neuromorphic computing, inspired by the human brain's remarkable capabilities and energy efficiency, is one of the promising pursuits to tackle this in contemporary computational research \cite{schuman2017survey,schuman2022opportunities,mehonic2022brain}. To this end, memristors (memory resistors) have emerged as promising candidates \cite{jeong2016memristors,zhu2020comprehensive} due to their analogous behaviour to biological synapses, the connections between neurons, and to neuronal ion channels that enable signal propagation within individual neurons \cite{sah2014brains}, also driving research in memristor based mathematical neuron models \cite{chua2013memristor,liu2024firing,li2023electrical,li2022application,li2023locally,xu2022modeling,hu2019dynamic,al2015memristors,li2024firing,bao2018symmetric}. Memristors are typically classified as non-volatile or volatile, where a non-volatile memristor will retain its altered conductance and a volatile memristor will dynamically relax back to its equilibrium state when external driving forces are removed \cite{caravelli2018memristors}. Among the diverse types of memristors, fluidic iontronic memristors have recently garnered significant interest \cite{yu2023bioinspired,khan2023advancement,hou2023learning,kim2023liquid,xie2022perspective}, operating in the same aqueous electrolyte environment as the brain and offering possibilities not only for multiple information carriers in parallel but also for chemical regulation and bio-integrability \cite{han2022iontronics}. This has led to proposals for artificial neuronal spiking circuits \cite{robin2021principles,kamsma2023iontronic,kamsma2024advanced}, advances in implementing chemical regulation \cite{Han2023IontronicDissolution,robin2023long,xiong2023neuromorphic,wang2024aqueous}, and demonstrations of neuromorphic \cite{kamsma2024brain} and logic \cite{emmerich2024nanofluidic} computations.

In this work, we introduce and analyse a general class of \textit{Simple Volatile Memristors} (SVMs). This class of SVMs not only includes various volatile fluidic memristors \cite{robin2023long,kamsma2023iontronic,kamsma2023unveiling,cervera2024modeling,kamsma2024brain}, but also memristors naturally appearing in plants \cite{markin2014analytical}. The work we present here is of a generic mathematical nature, however we also provide clear and explicit links to physical iontronic model systems by verifying our predictions with physical finite-element calculations of memristive microfluidic ion channels. These simulations solely incorporate the physical equations governing the device dynamics and include no explicit prior knowledge of the generic mathematical framework presented in this work. In our comparison to physical devices, our focus is specifically directed towards fluidic iontronic memristors as these devices are the topic of much recent research, as discussed above.
\begin{figure*}[ht]
		\centering
		\includegraphics[width=1\textwidth]{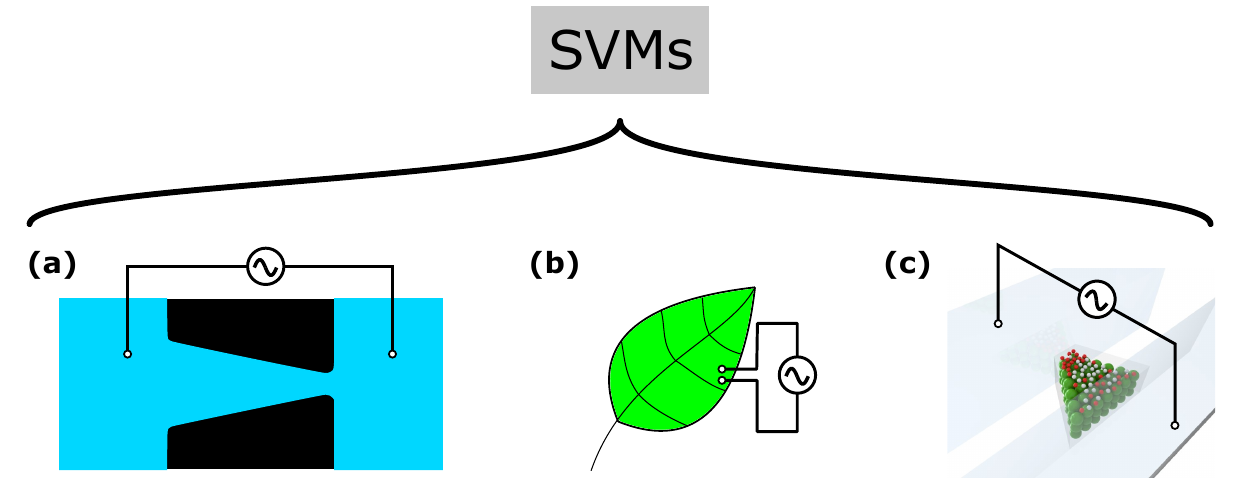}
		\caption{SVMs have been shown to include various distinct memristive devices. A few examples include \textbf{(a)} conical microchannels containing an aqueous electrolyte \cite{kamsma2023iontronic,kamsma2023unveiling,cervera2024modeling}, \textbf{(b)} naturally occurring memristors in plants \cite{markin2014analytical}, \textbf{(c)} fluidic channels containing a Nanochannel Network Membrane embedded within a colloidal structure \cite{choi2016high,kamsma2024brain}, and (sub)nanoscale planar slits (not depicted) \cite{robin2023long}.}
		\label{fig:IntroFig}
\end{figure*}

As various SVMs have garnered interest \cite{robin2023long,kamsma2023iontronic,kamsma2023unveiling,cervera2024modeling,kamsma2024brain,markin2014analytical}, we propose a simple unifying mathematical theory that consists of the simple conductance equation of motion for SVMs, from which we derive various dynamic features of the current-voltage hysteresis loops of individual SVMs and of SVM-based spiking circuits. We begin with formally defining the class of SVMs using the aforementioned conductance equation of motion in Sec.~\ref{sec:framework}, which is based on natural characteristics of generic volatile memristors. A standard feature of memristors is the pinched hysteresis loop that emerges in the current-voltage diagram when a periodic potential is applied \cite{chua2014if}. In Secs.~\ref{sec:typeIandII} and \ref{sec:area} we will derive various general properties of the hysteresis loops that emerge from SVMs. In Sec.~\ref{sec:typeIandII} we demonstrate a relation between the spatial symmetry of an SVM and the type of current-voltage hysteresis loops that emerge from it. Specifically, we establish that spatially symmetric SVMs consistently display hysteresis loops that are not self-crossing in the origin, in contrast to their asymmetric counterparts which produce self-crossing loops under reasonable assumptions. This prediction is shown to materialise in physical simulation results of symmetric and asymmetric memristive microchannels. In Sec.~\ref{sec:area}, we will derive a general expression for the area enclosed within the current-voltage hysteresis loop and show that the frequency of the periodic potential for which this area is maximal can be predicted when the steady-state conductance of an SVM is known. The enclosed area is (typically) maximal when $2\pi f\tau\approx1$, with $f$ the (dimensional) frequency of the applied (sinusoidal) voltage and $\tau$ the memory timescale of the device. For steady-state conductances that are linear functions of the voltage, the relation  $2\pi f\tau=1$ is exact. Since pinched hysteresis loops are a standard feature to investigate for memristors \cite{chua2014if}, our result provides a straightforward method to use experimentally observed hysteresis loops to determine the memory timescale $\tau$ of the device, a method already shown to be of value \cite{kamsma2024brain}. Conversely, if an estimate for $\tau$ is already known, then experiments can be sped up significantly by directly pinpointing the optimal frequency regime. 

Expanding our investigation to SVM based applications, we shift our attention in Sec.~\ref{sec:circuit} to a recently proposed fluidic iontronic circuit that exhibits characteristic features of neuronal communication in the form of all-or-none action potentials and spike trains \cite{kamsma2023iontronic}. We convert the originally four-dimensional system of equations to a two-dimensional dynamical system containing containing merely four parameters. This system is reminiscent of the FitzHugh-Nagumo \cite{fitzhugh1961impulses,nagumo1962active} and Morris-Lecar \cite{morris1981voltage} models, while being completely physically plausible due to its direct derivation from physical equations. Additionally, we will show that the inclusion of voltage noise, inherently present in any circuit \cite{sarpeshkar1993white}, enables new characteristic features of neuronal spiking that are not present in the deterministic case.

Given that various memristors were already found to be well described by the SVM class we define in Sec.~\ref{sec:framework} \cite{robin2023long,markin2014analytical,kamsma2023iontronic,cervera2024modeling,kamsma2023unveiling,kamsma2024brain}, it is reasonable to expect that new devices will be presented in the future that will also fall within our definition of SVMs. Therefore our work is also likely of relevance to future memristors and the spiking circuits they possible enable. Therefore, the results we derive here form a step towards a constructive frameworks for this class of memristors, which could be of particular value to the newly emerging field of fluidic iontronic neuromorphic devices.

\section{Simple Volatile Memristors}\label{sec:framework}
The dynamic conductance $g(t)$ of volatile memristors dynamically transitions to some steady-state conductance $h$ and reverts back to its equilibrium conductance if external driving forces are removed. In the case of a voltage-driven memristor, the steady-state conductance is determined by the voltage $V$ over the memristor, i.e.\ $h=h(V)$ \cite{chua2014if}. Since volatile memristors transition to their steady-state conductance for a given voltage, a natural assumption is that the time-derivative $\dot{g}$ of a volatile memristor equals some function $f$ of the difference $h(V(t))-g(t)$, i.e.\ 
\begin{align*}
    \dfrac{\dd g(t)}{\dd t}=f\big(h(V(t))-g(t)\big).
\end{align*}
For stability arguments, under the above assumption, it must hold that $f(0)=0$. However, system dynamics can be notoriously difficult physics problems, so the full form of $f$ is often unknown. Let us therefore expand $f$ to first order $f(x)=f(0)+f^{\prime}(0)x+\mathcal{O}(x^2)$, yielding 
\begin{align*}
   \dfrac{\dd g(t)}{\dd t}=f(0)+\frac{h(V(t))-g(t)}{\tau}+\mathcal{O}\big((h(V(t))-g(t))^2\big), 
\end{align*}
where $f^{\prime}(0)=\tau^{-1}$ and $\mathcal{O}\big((h(V(t))-g(t))^2\big)$ indicates higher order terms. The unknown factor $\tau$ must be some timescale on the basis of its dimensionality, which we can absorb in our time variable $t/\tau\to t$ to obtain a dimensionless time. With this we arrive at the simple expression $\dot{g}(t)=h(V(t))-g(t)$. Because this description entails volatile memristors with a simple decay towards some steady-state conductance $h(V(t))$, we name the class of memristor described by this method as Simple Volatile Memristors (SVMs). The full definition of an SVM is laid out below in Def.~\ref{def:SVMdef}.
\begin{definition}\label{def:SVMdef}
A \textit{Simple Volatile Memristor (SVM)} exhibits a dynamic conductance $g(t)$ that traces a steady-state conductance $h(V(t))$ such that the current through the SVM $I_{\mathrm{m}}(t)$ and the dynamic conductance upon applying a voltage $V(t)$ are described by
	\begin{numcases}{}
		I_{\mathrm{m}}(t)=g(t)V(t)\label{eq:Im},\\
		\dfrac{\mathrm{d}g(t)}{\mathrm{d}t}=h(V(t))-g(t)\label{eq:dgdt},
	\end{numcases}
with $h(x)$ an analytical function which, without loss of generality,  we consider to be of the form
\begin{align}
    h(x)=\left(1+\sum_{i=1}^{\infty}\alpha_{2i-1}x^{2i-1}+\sum_{j=1}^{\infty}\beta_{2j}x^{2j}\right),\label{eq:h}
\end{align}
with $x\in\mathds{R}$, and $\alpha_{i},\beta_{j}\in\mathds{R}$ for all $i,j\in\mathbb{N}$.
\end{definition}
\noindent It is straightforward to check that Def.~\ref{def:SVMdef} forms a subclass of the general definition of voltage-driven memristors \cite{chua2014if}.

The formulation of Def.~\ref{def:SVMdef} has been successfully applied to quantitatively describe various fluidic iontronic memristors \cite{robin2023long,kamsma2023iontronic,kamsma2023unveiling,cervera2024modeling,kamsma2024brain}. Def.~\ref{def:SVMdef} has also been applied in terms of the dynamic resistance, rather than the conductance, in the modelling of memristors that naturally appear in plants \cite{markin2014analytical}. Thus far most physical devices are described through their conductance \cite{robin2023long,kamsma2023iontronic,kamsma2023unveiling,cervera2024modeling,kamsma2024brain}, but Def.~\ref{def:SVMdef} and the consequent results we describe are straightforward to expand to also formally include dynamic resistances. Therefore, depending on the form of $h(V(t))$, SVMs include a variety of memristors. It should be noted that ordinarily the dynamic variable of a memristor is some internal state variable which then in turn determines the conductance, while Def.~\ref{def:SVMdef} treats the conductance itself directly as the dynamic variable. Within our SVM framework, it is in fact also possible that there is an internal physical state parameter which is equal to the conductance up to a multiplicative factor \cite{kamsma2023iontronic,kamsma2024brain}, meaning that Def.~\ref{def:SVMdef} still applies while the actual dynamical variable is some internal (physical) state of the memristor.

Via Eq.~(\ref{eq:h}) we can make a link to the spatial symmetry of the SVM. A symmetric SVM must exhibit the same equation of state $h(V)$ for a potential of $V$ and $-V$ (i.e.\ $h(V)=h(-V)$), since changing the sign is equivalent to switching the two terminals of the SVM. Since the sign of the potential can have no impact, we can conclude that all $\alpha_i=0$ for a symmetric SVM. This is how we will distinguish between symmetric and asymmetric SVMs as defined in Def.~\ref{def:symmetry}.
 \begin{definition}\label{def:symmetry}
     A \textit{symmetric SVM} is an SVM  as in Def.~\ref{def:SVMdef} such that $\alpha_{2i-1}=0$ as in Eq.~(\ref{eq:h}) $\forall i\in\mathds{N}$. An \textit{asymmetric SVM} is an SVM such that there exists an index $i\in\mathds{N}$ such that $\alpha_{2i-1}\neq0$ as in Eq.~(\ref{eq:h}). A \textit{simple asymmetric SVM} is an asymmetric SVM such that $h(x)\neq h(y)$ for all $x>0$ and $y<0$.
 \end{definition}
 \noindent We note that asymmetric physical SVMs presented so far are often also simple asymmetric SVMs \cite{markin2014analytical,kamsma2023iontronic,kamsma2023unveiling,kamsma2024brain}.

In experiments, a periodic sweeping potential $V(t)$ is typically imposed which creates a pinched hysteresis loop in the current-voltage ($I-V$) plane, the hallmark of a memristor \cite{chua2014if}. In Def.~\ref{def:voltage} we define the class of potentials $V(t)$ that includes virtually all signalling waveforms used to create hysteresis loops.
 \begin{definition}\label{def:voltage}
     A \textit{sweeping potential} $V:\mathds{R}\to\mathds{R}$ is a $T$-periodic smooth bounded function with bounded derivatives, with $V(t)=0$ exactly twice per period at times $t_1$ and $t_2=t_1+T/2$ and which is differentiable in $t_1$ and $t_2$. An \textit{antisymmetric sweeping potential} additionally satisfies that $V(t)=-V(t+T/2)$ for all $t$.
 \end{definition}
Although square and sawtooth waveforms are not differentiable in both $t_1$ and $t_2$, they can be approximated arbitrarily closely by smooth harmonics. Therefore Def.~\ref{def:voltage} effectively includes most typical signalling waveforms such as sine, triangle, square, and sawtooth waves.

The dynamics conductance and current in Def.~\ref{def:SVMdef} can be written in an integral form, where we can fix the integration constant using the periodicity of a sweeping potential. Eq.~(\ref{eq:dgdt}) has the general solution
\begin{align}\label{eq:intsol}
	g(t)=e^{-t}\int_{0}^{t}h(V(s))e^{s}\mathrm{d}s+e^{-t}g(0),
\end{align}
where $g(0)$ is an integration constant. The periodicity of a sweeping potential $V(t)$ results (after all transients have decayed) in a periodic conductance, i.e.\ $g(t)=g(t+T)$ \cite{yang2019periodic}. By using that $g(0)=g(T)$ we calculate
\begin{gather*}
	g(0)=e^{-T}\left[\int_{0}^{T}h(V(s))e^{s}\mathrm{d}s+g(0)\right],\\
	g(0)=\frac{\int_{0}^{T}h(V(s))e^{s}\mathrm{d}s}{e^{T}-1}.
\end{gather*}
Now that we have expressions for $g(t)$ and $g(0)$ (in the case of a periodic $V(t)$), we can calculate the current through the memristor $I_{\mathrm{m}}(t)=g(t)V(t)$ with the following expression
\begin{align}\label{eq:Imeq}
	I_{\mathrm{m}}(t)=V(t)e^{-t}\left[\int_{0}^{t}h(V(s))e^{s}\mathrm{d}s+g(0)\right].
\end{align}
Once we know the function $h$, i.e.\ the steady-state conductance, we can evaluate Eqs.~(\ref{eq:intsol}) and (\ref{eq:Imeq}) to calculate the dynamic conductance $g(t)$ and current $I_{\mathrm{m}}(t)$ for a $V(t)$ of choice.

\subsection{Connection to a physical SVM}
To make the connection to a physical SVM example explicit, we link our general model to (conical) microfluidic ion channels, iontronic model systems that have been extensively studied experimentally \cite{cheng2007rectified, siwy2006ion, bush2020chemical, jubin2018dramatic, siwy2002rectification,fulinski2005transport,siwy2005asymmetric} as well as numerically \cite{duleba2022effect, lan2016voltage, vlassiouk2008nanofluidic, liu2012surface, kubeil2011role}, with several analytical descriptions for the static properties of the channel \cite{boon2021nonlinear, dal2019confinement, poggioli2019beyond,uematsu2022analytic}. Specifically, we consider axisymmetric tapered channels of length $L$ with a charged surface connecting two electrolyte reservoirs. Recently the analytical understanding of such conical channels was extended to its conductance dynamics \cite{kamsma2023iontronic}, for which the framework laid out in Def.~\ref{def:SVMdef} was used.

Ref.~\cite{kamsma2023iontronic} focuses on a physical device and hence the used quantities are in physical and not dimensionless units. Here we delineate how to transform the dimensionless mathematical results into the relevant physical dimensions corresponding to a conical channel, making the practical applicability and interpretation of the results presented in this work explicit. The dynamical conductance of a single conical channel was found to be well described by $\tau\dot{g}_{\mathrm{s}}(t)=g_0h_{\infty}(V_{\mathrm{s}}(t))-g_{\mathrm{s}}(t)$, describing a relaxation to the steady-state conductance $g_0h_{\infty}(V_{\mathrm{s}}(t))$ depending on the voltage $V_{\mathrm{s}}(t)$ over the channel over a timescale $\tau$, with $g_0$ being the steady-state Ohmic conductance of the channel when no voltage is applied. We see that it is straightforward to write this equation in dimensionless form in agreement with Def.~\ref{def:SVMdef} by converting to dimensionless time $t/\tau$, dimensionless conductance $g_{\mathrm{s}}/g_0$, and dimensionless voltage $V/V_{\mathrm{r}}$, with $V_{\mathrm{r}}=1$ V a reference voltage indicating a typical voltage scale that emerges from the circuit presented in Ref.~\cite{kamsma2023iontronic}.

\section{The (a)symmetry of SVMs}\label{sec:typeIandII}
Memristors are characterized by an emerging hysteresis loop in the current-voltage ($I-V$) plane when a periodic potential is applied that is pinched in the origin \cite{chua2014if}. At this pinch, the loop can be self-crossing or not. A memristor that produces a self-crossing loop is named a type I memristor and a type II memristor produces a loop which does not cross itself, but rather has opposite tangent trajectories at the pinch \cite{pershin2011memory}. Here we show that whether an SVM is of type I or II depends on the symmetry of the device. With the help of the tangent vector $\vec{\gamma}(t)$ of the hysteresis loop in the $I-V$ plane we can investigate these classifications. This tangent vector is given by
\begin{align*}
	\vec{\gamma}(t)=\begin{pmatrix}
		\dfrac{\mathrm{d}I_{\mathrm{m}}(t)}{\mathrm{d}t}\vspace{0.2 cm}\\
		\dfrac{\mathrm{d}V(t)}{\mathrm{d}t}
	\end{pmatrix}.
\end{align*}
Let us first give some more rigorous definitions of type I and type II memristors below in Def.~\ref{def:typeDef}.
\begin{definition}\label{def:typeDef}
	Let $V(t)$ be a sweeping potential as in Def.~\ref{def:voltage}.
	\begin{itemize}
		\item An SVM is of \textit{type I} if and only if $\forall\lambda\in\mathbb{R}\backslash \{0\}$ it holds that $\vec{\gamma}(t_1)\neq\lambda\vec{\gamma}(t_2)$.
		\item An SVM is of \textit{type II} if and only if $\exists\lambda\in\mathbb{R}\backslash \{0\}$ such that $\vec{\gamma}(t_1)=\lambda\vec{\gamma}(t_2)$.
	\end{itemize}
\end{definition}
Note that the class of periodic potentials $V(t)$ as defined in Def.~\ref{def:typeDef} includes any harmonic waveform that has precisely one distinct positive area and precisely one distinct negative area per period. 

\subsection{Relation between type of SVMs and their symmetry}\label{sec:timeorientation}
\begin{figure*}[ht]
		\centering
		\includegraphics[width=1\textwidth]{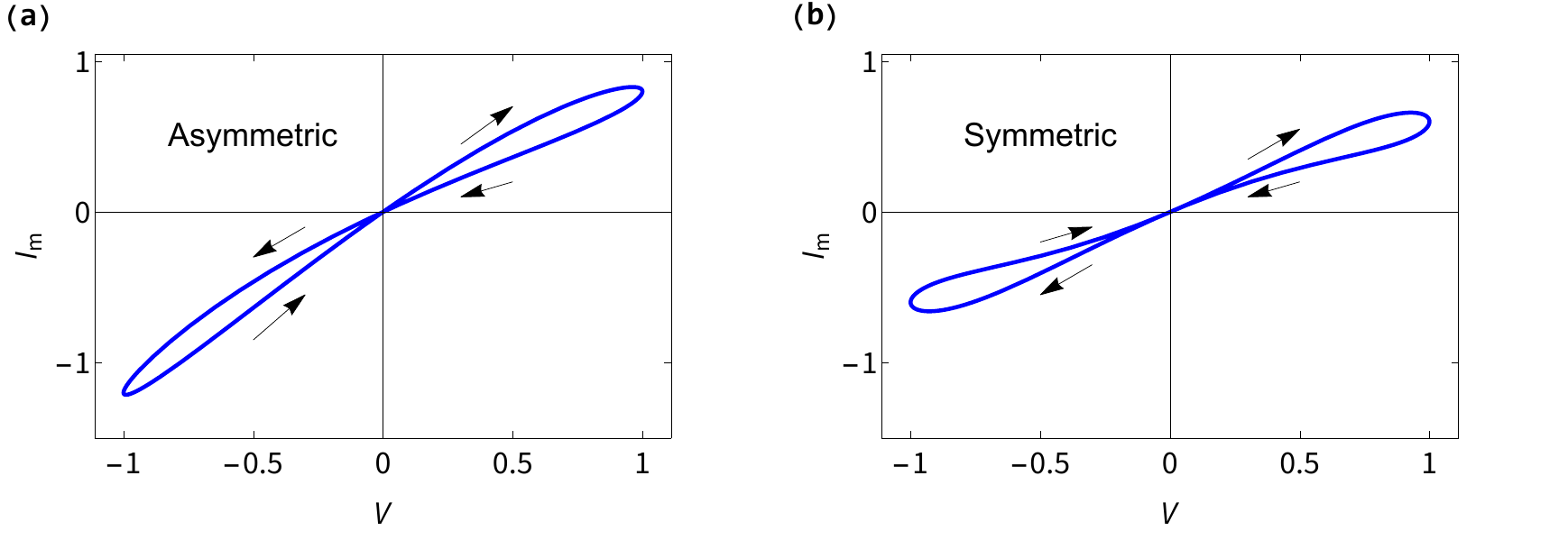}
		\caption{$I-V$ diagrams of SVMs with $V(t)=\sin(\omega t)$ with $\omega=2\pi/T$, $T=2\pi$, and $h(V(t))=1+\alpha V(t)+\beta V(t)^2$ for \textbf{(a)} $\alpha =-2/3$, $\beta =0$ and \textbf{(b)} $\alpha =0$, $\beta =-2/3$. This corresponds to \textbf{(a)} a simple asymmetric SVM and \textbf{(b)} a symmetric SVM, where we see the asymmetric SVM is of type I in agreement with Lemma \ref{lem:asym} and that the symmetric SVM is of type II in agreement with Lemma \ref{lem:sym}.}
		\label{fig:symasymfig}
\end{figure*}
In the next two lemmas we prove that Symmetric SVMs are of type II and that simple asymmetric SVMs are of type I.

\begin{lemma}\label{lem:sym}
	Let $V(t)$ be an antisymmetric sweeping potential as in Def.~\ref{def:voltage}, then a symmetric SVM as in Def.~\ref{def:symmetry} is of type II.
\end{lemma}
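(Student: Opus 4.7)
The plan is to exploit the half-period symmetry that the hypotheses impose on $g$. First I would evaluate the tangent vector at the pinch times. Since $V(t_1)=V(t_2)=0$, differentiating $I_{\mathrm{m}}=gV$ gives $\dot{I}_{\mathrm{m}}(t_i)=g(t_i)\dot{V}(t_i)$, so
\[
\vec{\gamma}(t_i)=\dot{V}(t_i)\begin{pmatrix}g(t_i)\\ 1\end{pmatrix}.
\]
Antisymmetry of $V$ yields $\dot{V}(t_2)=-\dot{V}(t_1)$, so if we can show $g(t_1)=g(t_2)$, the relation $\vec{\gamma}(t_1)=-\vec{\gamma}(t_2)$ follows with $\lambda=-1\neq 0$, which exhibits the type II property according to Def.~\ref{def:typeDef}.

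The core step is therefore to prove the identity $g(t+T/2)=g(t)$ for all $t$. Setting $\tilde g(t):=g(t+T/2)$ and differentiating gives
\[
\dot{\tilde g}(t)=h\bigl(V(t+T/2)\bigr)-\tilde g(t)=h(-V(t))-\tilde g(t)=h(V(t))-\tilde g(t),
\]
where the last equality uses that a symmetric SVM has $h$ even: all $\alpha_{2i-1}=0$ in Eq.~(\ref{eq:h}) makes $h$ a function of $V^{2}$. Thus $\tilde g$ and $g$ satisfy the same linear ODE, and both are $T$-periodic, where the periodicity of $g$ is the one already used to fix the integration constant in Eq.~(\ref{eq:intsol}). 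Their difference $d:=g-\tilde g$ then solves $\dot d=-d$ and is $T$-periodic, but the general solution $d(t)=d(0)e^{-t}$ is periodic only if $d\equiv 0$. Evaluating at $t_1$ yields $g(t_1)=g(t_2)$, and the lemma follows.

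I expect the main obstacle to be precisely this periodicity/uniqueness step: one has to argue that no nontrivial $T$-periodic solution of $\dot d=-d$ exists, which although elementary (via the exponential decay observation above) is the load-bearing analytic input; the rest is symbolic manipulation that combines the two symmetry assumptions in a single line. A minor subtlety worth flagging is the degenerate case $\dot{V}(t_1)=0$: by the antisymmetry of $V$ this forces $\dot{V}(t_2)=0$ as well, so both tangent vectors vanish and the type II condition holds trivially for every $\lambda\neq 0$.
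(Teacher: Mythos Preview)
Your proof is correct and follows essentially the same approach as the paper: both establish that $g$ is $T/2$-periodic by combining the evenness of $h$ with the antisymmetry of $V$, then read off $g(t_1)=g(t_2)$ and conclude. The only notable difference is that where the paper invokes a cited result on periodic solutions of periodic linear ODEs, you supply the elementary argument directly (the difference $d=g-\tilde g$ solves $\dot d=-d$ and hence cannot be nontrivially periodic), which makes your version self-contained; you also explicitly dispose of the degenerate case $\dot V(t_1)=0$, which the paper's choice $\lambda=\dot V(t_1)/\dot V(t_2)$ leaves tacit.
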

\begin{proof}
	Let us start by calculating $\vec{\gamma}(t)$ in $t_1$ and $t_2$, which yields
	\begin{align*}
	\vec{\gamma}(t_1)=\begin{pmatrix}
		g(t_1)\dot{V}(t_1)\vspace{0.2 cm}\\
		\dot{V}(t_1)
	\end{pmatrix},\quad\vec{\gamma}(t_2)=\begin{pmatrix}
		g(t_2)\dot{V}(t_2)\vspace{0.2 cm}\\
		\dot{V}(t_2)
	\end{pmatrix},
\end{align*}
where we used that $V(t_1)=V(t_2)=0$. Since $\alpha_i=0$ for all $i$ in Eq.~(\ref{eq:h}), $h$ is $T/2$-periodic as $h$ solely contains even powers of $V(t)$ and $V(t)=-V(t+T/2)$ for all $t$ for antisymmetric sweeping potentials. So Eq.~(\ref{eq:dgdt}) is a nonautonomous ordinary differential equation with $T/2$-periodic parameters, therefore there exists a unique, $T/2$-periodic solution $g(t)$ \cite{yang2019periodic}. Since $g(t)$ is $T/2$-periodic and $t_1=t_2+T/2$, it follows directly that $g(t_1)=g(t_2)$. Now choose $\lambda=\dot{V}(t_1)/\dot{V}(t_2)$, from which it follows that $\vec{\gamma}(t_1)=\lambda\vec{\gamma}(t_2)$ and we conclude that the SVM is of type II.
\end{proof}

\begin{lemma}\label{lem:asym}
	Let $V(t)$ be a sweeping potential as in Def.~\ref{def:voltage}, then a simple asymmetric SVM as in Def.~\ref{def:symmetry} is of type I.
\end{lemma}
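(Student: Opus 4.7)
The plan is to mirror the structure of Lemma~\ref{lem:sym}, but this time show that for a simple asymmetric SVM the conductance at the two pinch points must differ, which will rule out any nontrivial parallelism of the tangent vectors. First, I would repeat the computation from Lemma~\ref{lem:sym} to obtain $\vec{\gamma}(t_i)=(g(t_i)\dot{V}(t_i),\dot{V}(t_i))^{\!\top}$ using $V(t_1)=V(t_2)=0$. In the generic regime $\dot{V}(t_1),\dot{V}(t_2)\neq 0$, which holds for all standard sweeping potentials, the parallelism condition $\vec{\gamma}(t_1)=\lambda\vec{\gamma}(t_2)$ with $\lambda\neq 0$ forces $\lambda=\dot{V}(t_1)/\dot{V}(t_2)$ from the second component and then $g(t_1)=g(t_2)$ from the first, so it suffices to prove $g(t_1)\neq g(t_2)$.

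To get a workable formula for this difference, I would shift time so that $t_1=0$ and $t_2=T/2$, substitute the explicit periodic initial value $g(0)=\int_0^T h(V(s))e^s\,\mathrm{d}s/(e^T-1)$ into Eq.~(\ref{eq:intsol}), split $\int_0^T=\int_0^{T/2}+\int_{T/2}^T$, perform the change of variables $s\mapsto s+T/2$ in the second integral, and use $e^T-1=(e^{T/2}-1)(e^{T/2}+1)$. A short rearrangement should collapse everything to
\[
g(t_1)-g(t_2)=\frac{1}{e^{T/2}+1}\int_0^{T/2}\!\big[h(V(s+T/2))-h(V(s))\big]\,e^{s}\,\mathrm{d}s.
\]

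It then remains to show that this integral is nonzero. Because a sweeping potential has exactly two zeros per period, the intermediate value theorem forces $V$ to have constant sign on $(0,T/2)$ and the opposite constant sign on $(T/2,T)$. Hence, for each $s\in(0,T/2)$, $V(s)$ and $V(s+T/2)$ have opposite signs, and the simple asymmetric condition $h(x)\neq h(y)$ for all $x>0$ and $y<0$ yields $h(V(s))\neq h(V(s+T/2))$ throughout this open interval. Continuity of $h\circ V$ then upgrades this pointwise nonequality to a fixed sign on $(0,T/2)$; multiplication by $e^s>0$ preserves the sign, so the integral is strictly signed and nonzero, giving $g(t_1)\neq g(t_2)$ as desired.

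The main obstacle is this last step, where one must turn the bare pointwise inequality supplied by simple asymmetry into a definite sign of the integrand. Continuity of $h\circ V$ does the heavy lifting automatically once the sign dichotomy of $V$ on the two half-periods is in place, so the real prerequisite is the transversality of the zeros of $V$, i.e.\ the implicit assumption $\dot{V}(t_i)\neq 0$ underlying any typical sweeping potential (the fully degenerate case $\dot{V}(t_1)=\dot{V}(t_2)=0$ collapses both tangent vectors to zero and makes the type~I / type~II distinction vacuous). The algebra in the second step is routine integral bookkeeping.
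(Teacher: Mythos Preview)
Your proposal is correct and follows essentially the same route as the paper: reduce the type~I claim to $g(t_1)\neq g(t_2)$ via the tangent vectors, express this difference through the integral solution~(\ref{eq:intsol}) with the periodic initial condition, and then use the simple-asymmetry hypothesis together with the fixed sign of $V$ on each half-period to force the resulting integral to be strictly signed. Your bookkeeping is slightly cleaner than the paper's---by shifting to $t_1=0$ you obtain the compact identity $g(t_1)-g(t_2)=(e^{T/2}+1)^{-1}\int_0^{T/2}[h(V(s+T/2))-h(V(s))]e^{s}\,\mathrm{d}s$ directly, whereas the paper reaches the equivalent relation by computing $g(t_2)$ and $g(t_2-T)$ separately and recombining via periodicity---and you also make explicit the transversality assumption $\dot V(t_i)\neq 0$ that the paper leaves implicit.
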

\begin{proof}
For all $t_{-}\in(t_1-T/2,t_1)$ and all $t_{+}\in(t_1,t_1+T/2)$ we know that $\text{sgn}(V(t_{+}))=-\text{sgn}(V(t_{-}))$. Therefore either $h(V(t_{+}))<h(V(t_{-}))$, or $h(V(t_{+}))>h(V(t_{-}))$, for all $t_{-}\in(t_1-T/2,t_1)$ and all $t_{+}\in(t_1,t_1+T/2)$. With this in mind we can define the following integrals $G_{+}$ and $G_{-}$ and conclude that they must be unequal
\begin{align*}
    G_{-}=&\int_{t_1-T/2}^{t_1}h(V(s))e^{s}\mathrm{d}s,\\
    G_{+}=&\int_{t_1-T/2}^{t_1}h(V(s+T/2))e^{s}\mathrm{d}s\neq G_{-}.
\end{align*}
We then calculate $g(t_2-T)$ and $g(t_2)$
    \begin{align*}
        g(t_2-T)=&e^{-t_1+T/2}\int_{0}^{t_1-T/2}h(V(s))e^{s}\mathrm{d}s+e^{-t_1+T/2}g(0)\\
        =&e^{-t_1+T/2}\left[\int_{0}^{t_1}h(V(s))e^{s}\mathrm{d}s-\int_{t_1-T/2}^{t_1}h(V(s))e^{s}\mathrm{d}s\right]\\
        &+e^{-t_1+T/2}g(0)\\
        =&e^{T/2}g(t_1)-e^{-t_1+T/2}g(0)\\
        &-e^{-t_1+T/2}\int_{t_1-T/2}^{t_1}h(V(s))e^{s}\mathrm{d}s+e^{-t_1+T/2}g(0)\\
        =&e^{T/2}g(t_1)-e^{-t_1+T/2}G_{-}\\\\
        \end{align*}
A similar calculation for $g(t_2)$ yields
        \begin{align*}
        g(t_2)=&e^{-t_1-T/2}\int_{0}^{t_1+T/2}h(V(s))e^{s}\mathrm{d}s+e^{-t_1-T/2}g(0)\\
        =&e^{-t_1-T/2}\left[\int_{0}^{t_1}h(V(s))e^{s}\mathrm{d}s+\int_{t_1}^{t_1+T/2}h(V(s))e^{s}\mathrm{d}s\right]\\
        &+e^{-t_1-T/2}g(0)\\
        =&e^{-T/2}g(t_1)+e^{-t_1-T/2}\int_{t_1-T/2}^{t_1}h(V(s+T/2))e^{s+T/2}\mathrm{d}s\\
        =&e^{-T/2}g(t_1)+e^{-t_1}G_{+}\\
    \end{align*}
By then adding $g(t_2-T)e^{-T/2}$ and $g(t_2)$ we obtain
    \begin{align*}
        g(t_2)+g(t_2-T)e^{-T/2}=&g(t_1)-e^{-t_1}G_{-}+e^{-T/2}g(t_1)+e^{-t_1}G_{+}.
    \end{align*}
Since $h(V(t))$ is $T$-periodic, so is $g(t)$, thus $g(t_2-T)=g(t_2)$
    \begin{align*}
        g(t_2)(1+e^{-T/2})=&g(t_1)(1+e^{-T/2})+e^{-t_1}\left(G_{+}-G_{-}\right)
    \end{align*}
As we saw in the beginning of this proof $G_{+}\neq G_{-}$ so
\begin{align*}
    e^{-t_1}\left(G_{+}-G_{-}\right)\neq 0 \implies g(t_1)\neq g(t_2).
\end{align*}
We already calculated $\vec{\gamma}(t)$ in $t_1$ and $t_2$ in the proof of Lemma \ref{lem:sym}. It is clear that $\forall\lambda\in\mathbb{R}\backslash\{0\}$ it holds that $\vec{\gamma}(t_1)\neq\lambda\vec{\gamma}(t_2)$ if $g(t_1)\neq g(t_2)$. We conclude that a simple asymmetric SVM as in Def.~\ref{def:symmetry} is of type I.
\end{proof}
Lemmas \ref{lem:sym} and \ref{lem:asym} nicely explain the observation on how different iontronic memristors produce different $I-V$ hysteresis loops remarked in Ref.~\cite{noy2023fluid}, where we now identify the device (a)symmetry as the relevant discerning factor (provided the iontronic device is an SVM). In the case of Lemma \ref{lem:sym} the result is not surprising, as it is intuitive that a symmetric device will yield point symmetric (type II) $I-V$ diagrams for antisymmetric sweeping potentials, whereas Lemma \ref{lem:asym} provides some less trivial criteria for the case of when an SVM is of type I (see, e.g.\ Remarks \ref{rk:h} and \ref{rk:asym}). In Sec.~\ref{sec:comparisonIV} we provide a physical example of Lemmas \ref{lem:sym} and \ref{lem:asym} by evaluating numerical simulations of (a)symmetric fluidic channels, showing that they indeed are of type (I) II.

\begin{remark}\label{rk:h}
     In the case of antisymmetric sweeping potentials, to prove Lemma \ref{lem:asym} the constraint for simple asymmetric SVMs on $h$ that $h(x)\neq h(y)$ for $x>0$ and $y<0$  can be relaxed to either of the two alternative constraints:
     \begin{enumerate}
         \item $h(x)\neq h(-x)$, for all $x>0$.
         \item There is only one $i\in \mathbb{N}$ such that  $\alpha_i\neq0$.
     \end{enumerate}
      These alternative constraints would also ensure that $G_{+}\neq G_{-}$, the rest of the proof goes the same. This is relevant since various typical signalling waveforms, such as sine, triangle, and square waves are also antisymmetric sweeping potentials.
\end{remark}
\begin{remark}\label{rk:asym}
    Not all asymmetric SVMs will be self-crossing in the origin. For example, we can pick specific $V(t)$, $\alpha_1$, and $\alpha_3$ such that the corresponding terms precisely cancel out in the integrals $G_{+}$ and $G_{-}$, which implies that $g(t_1)=g(t_2)$. However this would require rather convoluted steady-state conductances $h(V)$, not often reported in the physical SVMs presented thus far. 
\end{remark}
\begin{figure*}[ht]
		\centering
		\includegraphics[width=1\textwidth]{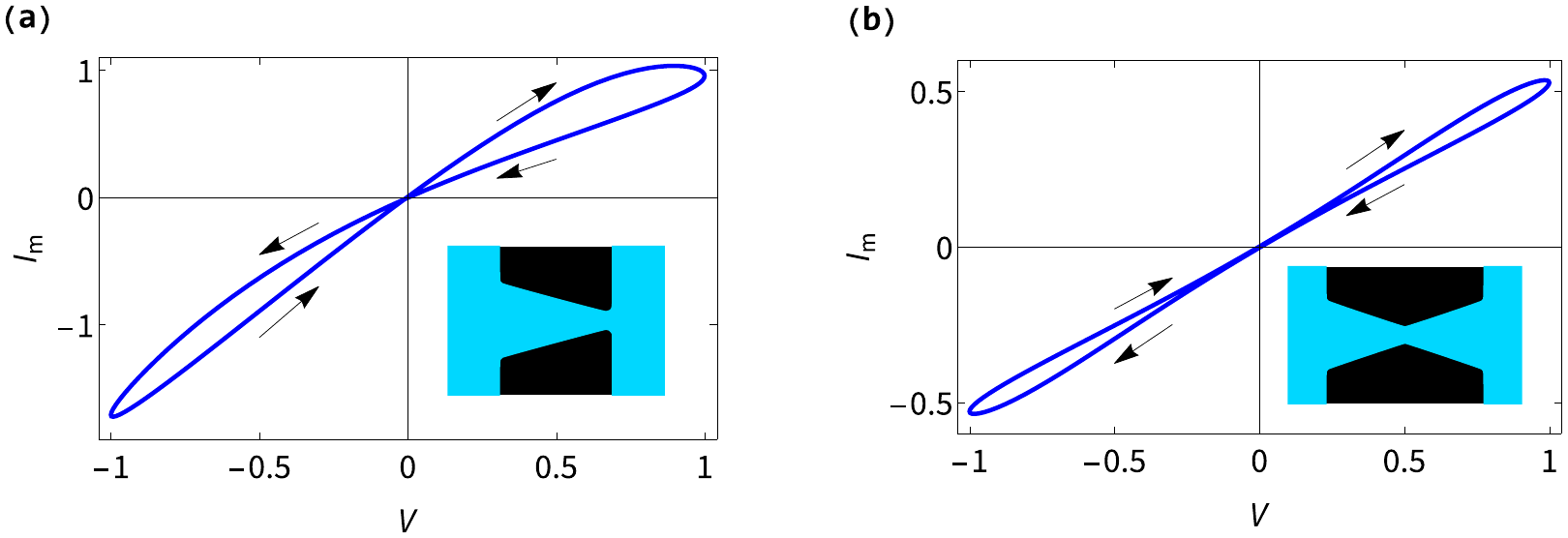}
		\caption{$I-V$ diagrams from physical continuum transport FE calculations of \textbf{(a)} a conical channel and \textbf{(b)} an hourglass channel when a periodic potential $V(t)=\sin(\omega t)$ is applied, with $\omega=2\pi f$ and $f=25$ Hz. These types of microfluidic channels were shown to be SVMs \cite{kamsma2023iontronic,kamsma2023unveiling}. The set of physical transport equations used for the results are described in the Appendix. Due to their respective geometries these are \textbf{(a)} an asymmetric SVM and \textbf{(b)} a symmetric SVM. The asymmetric conical channel is of type I in agreement with Lemma \ref{lem:asym} and the symmetric hourglass channel is of type II in agreement with Lemma \ref{lem:sym}.}
		\label{fig:conehourglass}
\end{figure*}

\subsection{Solving an SVM example}\label{sec:SVMexamp}
Let us calculate the memristor current $I_{\mathrm{m}}(t)$ for example SVMs of the form $h(x)=1+\alpha x+\beta x^2$ and show that they are indeed of type I if $\alpha\neq 0$ and of type II if $\alpha=0$. Consider the sweeping potential $V(t)=\sin(\omega t)$ with $t_1=nT$ and $t_2=(n+1/2)T$, for some $n\in\mathds{N}$. In this case $\vec{\gamma}(t_1)=\lambda\vec{\gamma}(t_2)$ if and only if $\lambda=-1$. So all that is left now to see if $\vec{\gamma}(t_1)\parallel \vec{\gamma}(t_2)$ is check whether $\dot{I}_{\mathrm{m}}(t_1)=-\dot{I}_{\mathrm{m}}(t_2)$. If this is the case then our example SVM is of type II, otherwise it is of type I. Using Eq.~(\ref{eq:intsol}) we find the following explicit expression for $I_{\mathrm{m}}(t)$
\begin{align*}
	I_{\mathrm{m}}(t)=&\sin(\omega t)\left[1+\alpha\frac{\left(\sin(\omega t)-\omega\cos(\omega t)\right)}{1+\omega^2}\right.\\
 &\left.+\frac{1}{2}\beta -\frac{1}{2}\beta \frac{\cos(2\omega t)+2\omega\sin(2\omega t)}{1+(2\omega)^2}\right].
\end{align*}
A straightforward calculation of $\dot{I}_{\mathrm{m}}$ evaluated at $t_1=nT$ and $t_2=(n+1/2)T$ then yields
\begin{align*}
	\left.\dfrac{\mathrm{d}I_{\mathrm{m}}(t)}{\mathrm{d}t}\right|_{t=t_1}=&\omega\left[1-\alpha\frac{\omega}{1+\omega^2}+\frac{1}{2}\beta -\frac{1}{2}\beta \frac{1}{1+(2\omega)^2}\right],\\
	\left.\dfrac{\mathrm{d}I_{\mathrm{m}}(t)}{\mathrm{d}t}\right|_{t=t_2}=&-\omega\left[1+\alpha\frac{\omega}{1+\omega^2}+\frac{1}{2}\beta -\frac{1}{2}\beta \frac{1}{1+(2\omega)^2}\right].
\end{align*}
As detailed, we know that the loop is self crossing at the origin only if $\dot{I}_{\mathrm{m}}(t_1)+\dot{I}_{\mathrm{m}}(t_2)=0$. We find that
\begin{align*}
	\left.\dfrac{\mathrm{d}I_{\mathrm{m}}(t)}{\mathrm{d}t}\right|_{t=t_1}+\left.\dfrac{\mathrm{d}I_{\mathrm{m}}(t)}{\mathrm{d}t}\right|_{t=t_2}=-\alpha\frac{2\omega^2}{1+\omega^2}.
\end{align*}
Since $\omega>0$, we see that $\vec{\gamma}(t_1)+\vec{\gamma}(t_2)=0$ if and only if $\alpha = 0$. Thus we conclude that indeed for any $\alpha\neq0$, the corresponding SVM is of type I, while it is of type II if $\alpha=0$. In Figs.~\ref{fig:symasymfig}(a) and \ref{fig:symasymfig}(b) we see $I-V$ diagrams for $\beta=0$ and $\alpha=0$, respectively. Indeed we see a self-crossing hysteresis in Fig.~\ref{fig:symasymfig}(a) while a the hysteresis loop in Fig.~\ref{fig:symasymfig}(b) does not cross itself.

As a final notion we consider the high frequency limit, since any memristor should behave like an ohmic resistor when the periodic potential is of high frequency \cite{chua1976memristive} (compared to the typical system timescale). So the slope of the curve in the $I-V$ plane should always be equal for $t_1$ and $t_2$ if $\omega\to\infty$, even when $\alpha\neq 0$. The slope in the $I-V$ plane in the limit $\omega\to\infty$ at $t_1$ and $t_2$ for our example SVM is given by
\begin{align*}
	\lim\limits_{\omega\to\infty}\left.\left(\dfrac{\mathrm{d}I_{\mathrm{m}}(t)}{\mathrm{d}t}\right)/\left(\dfrac{\mathrm{d}V(t)}{\mathrm{d}t}\right)\right|_{t=t_1}=&1+\beta/2,\\
	\lim\limits_{\omega\to\infty}\left.\left(\dfrac{\mathrm{d}I_{\mathrm{m}}(t)}{\mathrm{d}t}\right)/\left(\dfrac{\mathrm{d}V(t)}{\mathrm{d}t}\right)\right|_{t=t_2}=&1+\beta/2.
\end{align*}
So we indeed see that for $\omega\to\infty$, the slope for both $t_1$ and $t_2$ reduces to $1+\beta/2$, which is in agreement with the expectation of a generic memristor \cite{chua1976memristive}.

\subsection{Comparison to fluidic microchannel memristors}\label{sec:comparisonIV}
We claim that the SVMs we describe mathematically correspond to various physical devices \cite{robin2023long,markin2014analytical,kamsma2023iontronic,kamsma2023unveiling,kamsma2024brain}. Therefore, Lemmas \ref{lem:sym} and \ref{lem:asym} should materialise in such systems. To show this explicitly we will conduct physical continuum transport finite element (FE) calculations of (symmetric) hourglass and (asymmetric) conical microfluidic ion channels, using the FE analysis package COMSOL \cite{multiphysics1998introduction,pryor2009multiphysics}. The asymmetric conical channels are directly based on the channels discussed in Ref.~\cite{kamsma2023iontronic}, while the symmetric hourglass channels only differ in their geometry, but are otherwise subject to the same system parameters and physics. The FE calculations only incorporate the microscopic physical ion and fluid transport equations which contain no direct knowledge of any memristive properties, or channel conductance in general for that matter. Consequently, our FE calculations contain no prior knowledge whatsoever of the definitions and properties of SVMs shown in this work. In Fig.~\ref{fig:conehourglass}(a) we show the $I-V$ diagram resulting from a conical (and thus asymmetric) channel. A characteristic self-crossing hysteresis loop emerges, from which we conclude that the asymmetric channel indeed is of type I, as predicted by Lemma \ref{lem:asym}. Lemma \ref{lem:sym} then states that if the same system is converted to a symmetric channel, that the channel should form a type II memristor. Indeed, in Fig.~\ref{fig:conehourglass}(b) we see that the hysteresis loop that emerges from an hourglass (and thus symmetric) channel does not cross itself. The results shown in Fig.~\ref{fig:conehourglass} are direct physical manifestations of Lemmas \ref{lem:sym} and \ref{lem:asym}.

\subsection{Capacitor in parallel with a memristor}\label{sec:cap}
Def.~\ref{def:SVMdef} does not include any capacitance and thus we always have $I(t)=0$ if $V(t)=0$. In reality, physical systems can have an intrinsic capacitive element, which then materialises in a $I-V$ curve that does not pass through the origin \cite{sun2020non}. This is because a capacitive current is proportional to the time-derivative of the voltage over the capacitor, which is not necessarily zero when the voltage is zero. Capacitance has been reported for many different memristive devices \cite{markin2014analytical,abunahla2016resistive,zheng2018metal,kaur2018dopant,sarma2016observed,qingjiang2014memory,messerschmitt2015does,zhou2019evolution,xiao2019ultrathin,lee2016intrinsic}. Therefore, a more accurate circuit of a memristive device might be the circuit depicted in Fig.~\ref{fig:capcircuit}, where a memristor is connected in parallel with a series of a resistor with resistance $R$ and a capacitor with capacitance $C$. After transforming to dimensionless units, i.e.\ $V_{\mathrm{C}}/V_{\mathrm{r}}\to V_{\mathrm{C}}$, $I_{\mathrm{C}}R/V_{\mathrm{r}}\to I_{\mathrm{C}}$, $\tau_{\mathrm{RC}}=RC/\tau$, the dimensionless capacitive current is given by $I_{\mathrm{C}}=\tau_{\mathrm{RC}}\dot{V}_{\mathrm{C}}$, with $V_{\mathrm{C}}$ the voltage over the capacitor. The total current through the device is then $I_{\mathrm{m}}(t)+I_{\mathrm{C}}(t)$, meaning that the measured current will no longer be described by Def.~\ref{def:SVMdef}. However, most physical SVMs exhibit a vanishing current when $V(t)=0$, implying that their intrinsic capacitance is negligible \cite{robin2023long,kamsma2023iontronic,kamsma2023unveiling,kamsma2024brain}. Here we will show that in the limit of vanishing capacitance, i.e.\ $\tau_{\mathrm{RC}}\to0$, that $I_{\mathrm{C}}\to0$ and thus that Def.~\ref{def:SVMdef} still applies, despite the (negligible) capacitance.

\begin{figure}[t]
	\centering
	\begin{circuitikz} \draw
		(0,4)
		to[short,o-] (6,4) to[short,l=$I_{\mathrm{C}}$] (6,3 )to[capacitor,l=$C$] (6,2) to[resistor,l=$R$] (6,0)
		to[short,-o] (0,0)
		(3,4) to[short,l=$I_{\mathrm{m}}$] (3,3) to[memristor,l=$g$] (3,1) -- (3,0)
		;
	\end{circuitikz}
	\caption{Circuit including the capacitance of a memristive device \cite{markin2014analytical} by a conductive pathway with a capacitor with capacitance $C$ and a resistor with resistance $R$ connected in parallel to a pathway with a memristor with conductance $g$. Over the circuit a voltage $V$ can be applied, a current $I_{\mathrm{m}}(t)=g(t)V(t)$ flows through the pathway containing the memristor. A dimensionless capacitive current $I_{\mathrm{C}}(t)=\tau_{\mathrm{RC}}\dfrac{\mathrm{d}V_{\mathrm{C}}(t)}{\mathrm{d}t}$ flows through the capacitive pathway, where $V_{\mathrm{C}}(t)$ is the dimensionless potential over the capacitor and $\tau_{\mathrm{RC}}=RC/\tau$ the dimensionless RC time.}\label{fig:capcircuit}
\end{figure}
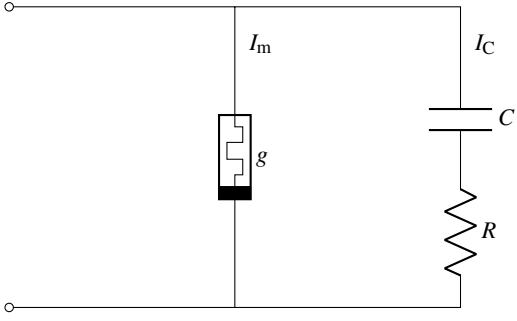

\begin{figure*}[ht]
		\centering
		\includegraphics[width=1\textwidth]{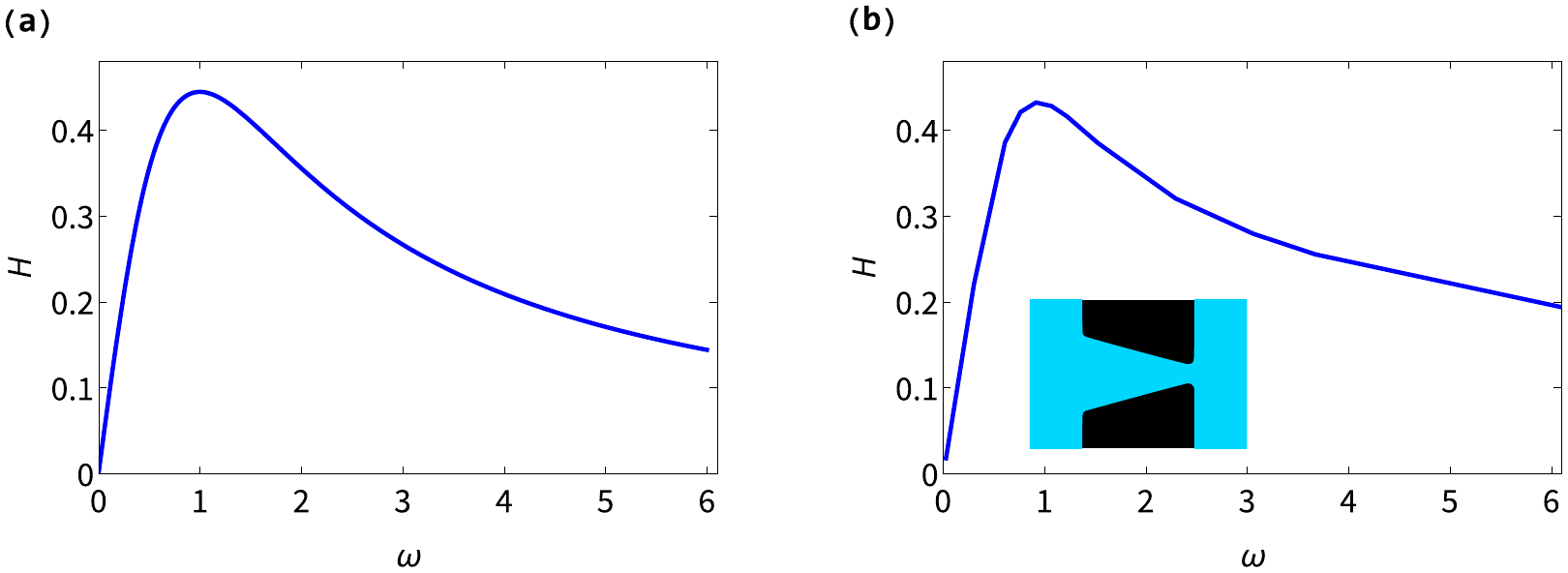}
		\caption{Enclosed surface area $H$ within an $I-V$ hysteresis loop of an SVM as a function of $\omega$ upon a sweeping potential $V(t)=\sin(\omega t)$ of \textbf{(a)} a general SVM with $h(x)=1+\alpha x$ and $\alpha =-2/3$, and \textbf{(b)} of a conical microfluidic ion channel determined by FE calculations of physical continuum transport equations.}
		\label{fig:Hfig}
\end{figure*}
In Ref.~\cite{markin2014analytical} it was shown that the capacitive current $I_{\mathrm{C}}(t)$ for the circuit in Fig.~\ref{fig:capcircuit} is given by
\begin{align}\label{eq:Iceq}
	I_{\mathrm{C}}(t)=&\dfrac{\mathrm{d}}{\mathrm{d}t}\left(e^{-t/\tau_{\mathrm{RC}}}\left[\int_{0}^{t}V(s)e^{s/\tau_{\mathrm{RC}}}\mathrm{d}s+V_{\mathrm{C}}(0)\right]\right),\\
 V_{\mathrm{C}}(0)=&\frac{\frac{1}{\tau_{\mathrm{RC}}}\int_{0}^{T}V(s)e^{s/\tau_{\mathrm{RC}}}\mathrm{d}s}{e^{T/\tau_{\mathrm{RC}}}-1},
\end{align}
where we again assume a sweeping potential $V(t)$ as per Def.~\ref{def:voltage}. We remark that if one does want to take capacitance into account, rather than ignoring it, that this is also possible by evaluating Eq.~(\ref{eq:Iceq}) once $C$ and $R$ are known.

Let us show that the capacitive current vanishes for $C\to0$, which is equivalent in our case to the limit $\tau_{\mathrm{RC}}\to0$. Firstly, note that $I_{\mathrm{m}}(t)$ is not dependent on $C$, so this quantity stays unchanged when we take the limit $\tau_{\mathrm{RC}}\to0$. The capacitive current $I_{\mathrm{C}}(t)$ is given by
\begin{align*}
	I_{\mathrm{C}}(t)=&-e^{-t/\tau_{\mathrm{RC}}}\left[\frac{1}{\tau_{\mathrm{RC}}}\int_{0}^{t}V(s)e^{s/\tau_{\mathrm{RC}}}\mathrm{d}s+V_{\mathrm{C}}(0)\right]+V(t).
\end{align*}
It is clear that the initial condition term $V_{\mathrm{C}}(0)$ will trivially vanish in the limit $\tau_{\mathrm{RC}}\to0$. Let us then focus on the first term
\begin{align*}
	-e^{-t/\tau_{\mathrm{RC}}}\frac{1}{\tau_{\mathrm{RC}}}\int_{0}^{t}V(s)e^{s/\tau_{\mathrm{RC}}}\mathrm{d}s=&-\int_{0}^{t}V(s)e^{-(t-s)/\tau_{\mathrm{RC}}}\frac{\mathrm{d}s}{\tau_{\mathrm{RC}}}
\end{align*}
We substitute $y=(t-s)/\tau_{\mathrm{RC}}$ and integrate by parts to obtain
\begin{align*}
    &-\int_{0}^{t/\tau_{\mathrm{RC}}}V(t-\tau_{\mathrm{RC}}y)e^{-y}\mathrm{d}y\\
    =&-V(t)+V(0)e^{-t/\tau_{\mathrm{RC}}}-\tau_{\mathrm{RC}}\int_{0}^{t/\tau_{\mathrm{RC}}}V^{\prime}(t-\tau_{\mathrm{RC}}y)e^{-y}\mathrm{d}y,
\end{align*}
where $V^{\prime}=\dd V/\dd y$. Now we need to show that the second and third term vanish in the limit $\tau_{\mathrm{RC}}\to 0$. The second term trivially vanishes, for the third term we find
\begin{align*}
    \tau_{\mathrm{RC}}\int_{0}^{t/\tau_{\mathrm{RC}}}V^{\prime}(t-\tau_{\mathrm{RC}}y)e^{-y}\mathrm{d}y\leq&||V^{\prime}||_{\infty} \int_{0}^{t/\tau_{\mathrm{RC}}}e^{-y}\mathrm{d}y\\
    =&||V^{\prime}||_{\infty}\tau_{\mathrm{RC}}\left[1-e^{-t/\tau_{\mathrm{RC}}}\right].
\end{align*}
This vanishes in the limit $\tau_{\mathrm{RC}}\to0$ since $V^{\prime}$ is bound and thus $||V^{\prime}||_{\infty}$ is finite. Thus we find that
\begin{align*}
    \lim\limits_{\tau_{\mathrm{RC}}\to0}-e^{-t/\tau_{\mathrm{RC}}}\frac{1}{\tau_{\mathrm{RC}}}\int_{0}^{t}V(s)e^{s/\tau_{\mathrm{RC}}}\mathrm{d}s=-V(t).
\end{align*}
Therefore we conclude that
\begin{align*}
     \lim\limits_{\tau_{\mathrm{RC}}\to0}\tau_{\mathrm{RC}}\dfrac{\mathrm{d}V_{\mathrm{C}}(t)}{\mathrm{d}t}=&-V(t)+V(t)=0.
\end{align*}
We see that indeed $\tau_{\mathrm{RC}}\dfrac{\mathrm{d}V_{\mathrm{C}}(t)}{\mathrm{d}t}\to 0$, thus the overall measured current is again given by $I_{\mathrm{m}}$ in agreement with Def.~\ref{def:SVMdef}.

\section{Quantitative measure of SVM \textit{I-V} hysteresis}\label{sec:area}
The area $H$ enclosed inside the hysteresis loop has been a property of interest \cite{georgiou2012quantitative,biolek2014interpreting,isah2022review}. Here we will present how the area can be used as a tool to estimate the typical timescale $\tau$ of an SVM, a method we already successfully applied in SVM experiments \cite{kamsma2024brain}. We present a general expression for $H$ for a type I SVM where the $I-V$ loop only intersects itself at the origin, for the case that a sweeping potential $V(t)=\sin(\omega t)$ is applied of period $T=2\pi/\omega$, where we recall that our dimensionless time $t$ is in units of the typical SVM memory retention time $\tau$. The enclosed surface area within a loop is given by
\begin{align*}
	H=&\left|\int_{nT}^{(n+1/2)T}I_{\mathrm{m}}(s)\dot{V}(s)\mathrm{d}s-\int_{(n+1/2)T}^{(n+1)T}I_{\mathrm{m}}(s)\dot{V}(s)\mathrm{d}s\right|\\
     =&\left|\int_{nT}^{(n+1/2)T}\sin(\omega s)\cos(\omega s)g(s)\mathrm{d}s\right.\\
     &\left.-\int_{(n+1/2)T}^{(n+1)T}\sin(\omega s)\cos(\omega s)g(s)\mathrm{d}s\right|.
\end{align*}
We take the absolute value to ensure that traversing a loop in either orientation yields the same $H$. In the Appendix Sec.~\ref{app:Hint} we show after a rather involved calculation that $H$ is given by
\begin{align}
	H=&\left|-\omega\sum_{k\in\mathds{O}}\alpha_{k}\left[\sum_{i=2}^{(k+1)/2}D_i(\omega)\frac{\Gamma\left(i\right)}{\Gamma\left(i-1/2\right)}\prod_{j=i+1}^{(k+1)/2}C_{2j-1}(\omega)\right.\right.\nonumber\\
 &\left.\left.+\frac{4}{3\left(1+\omega^2\right)}\prod_{n=1}^{(k-1)/2} C_{2n+1}(\omega)\right]\right|,\label{eq:H}
\end{align}
with $\mathds{O}$ the set of odd integers and
\begin{align*}
    C_j(\omega)=&\frac{\omega^2(j-1)j}{1+\omega^2j+\omega^2(j-1)j},\\
    D_i(\omega)=&\frac{4\sqrt{\pi}}{(1+2i)(1+(1-2i)^2\omega^2)}.
\end{align*}
We note that any contributions from possible $\beta_i$ terms are $T/2$ periodic and thus do not contribute to $H$ (as long as the SVM is of type I).

In general the $\omega$ for which $H$ is maximal depends on the values $\alpha_k$. If the function $h$ is known, then $H$ can be evaluated to observe for which value of $\omega$ the area is maximal, thereby uncovering the typical timescale of the device. In an experimental context, it is relatively straightforward to uncover the function $h$ since this is the steady-state conductance of the device, for which no dynamic measurements are required. In Sec.~\ref{sec:areaexample} we show that for the typical simple case of $\alpha_1\neq 0$ and $\alpha_{k}=0$ for all $k>1$, $k\in\mathds{O}$, that $H$ will always be maximal for $\omega=1$, regardless of the value of $\alpha_1$, where we recall that our dimensionless time $t$ is in units of the typical SVM memory retention time $\tau$. Therefore, in dimensionless time this means that $H$ is maximal if $\omega\tau=1$. We remark that $H$ is still maximal at $\omega\approx1$ for higher order terms in $h$, e.g.\ the $\alpha_3$ and $\alpha_5$ terms of $H$ are maximal at $\omega\approx0.91$ and $\omega\approx0.86$ respectively.
\begin{figure*}[ht]
	\centering
	\includegraphics[width=1\textwidth]{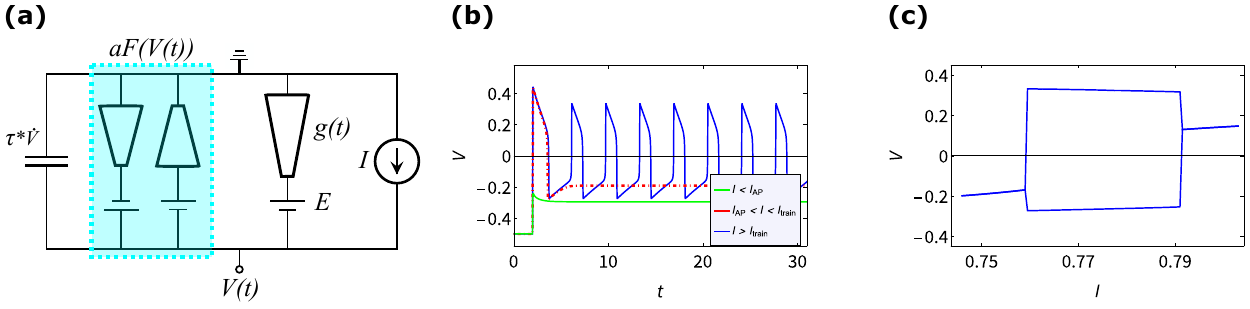}
	\caption{\textbf{(a)} Schematic circuit representation corresponding to Eqs.~(\ref{eq:mathCirc}) and (\ref{eq:mathCircg}). We distinguish four dimensionless current pathways, a capacitive current $\tau^*\dot{V}$, an imposed stimulus current $I$, a combined current  of  $aF(V(t))$ through two current rectifiers connected to individual batteries, and a current $g(t)(V(t)-E)$ through the SVM. These combined currents form Eq.~(\ref{eq:mathCirc}) by invoking Kirchhoff's law and as per Def.~\ref{def:SVMdef} $g(t)$ is described by Eq.~(\ref{eq:mathCircg}). Figure adapted from Ref.~\cite{kamsma2023iontronic}. \textbf{(b)} $V-t$ solutions for Eqs.~(\ref{eq:mathCirc}) and (\ref{eq:mathCircg}) for a subcritical input $I=0.65$ with no spike (green), a subcritical input that $I=0.75$ that generates a single spike (red) and a supercritical input $I=0.76$ that generated a periodic solution in the form of a spike train (blue). \textbf{(c)} A bifurcation diagram showing the Hopf bifurcations, with a spike train emerging for stimuli $I \in[0.7595,791]$.}
	\label{fig:deterministicCircuit}
\end{figure*}

\subsection{Comparison to fluidic microchannel memristor}\label{sec:areaexample}

As we showed in Secs.~\ref{sec:SVMexamp} and \ref{sec:comparisonIV}, the self-crossing hysteresis loop of a conical channel is well approximated by $h(x)=1+\alpha x$. Let us then calculate $H$ explicitly, where we recall that $V(t)=\sin(\omega t)$ with $\omega=2\pi/T$ is the dimensionless angular frequency and that our dimensionless time $t$ is in units of the typical SVM memory retention time $\tau$. A straightforward calculation shows that in this case
\begin{align*}
	H=-\frac{8\pi\alpha T}{12\pi^2+3T^2}=-\frac{(4/3)\alpha \omega}{\omega^2+1}.
\end{align*}
We can then investigate when $H$ is maximal,
\begin{gather*}
	\dfrac{\mathrm{d}H}{\mathrm{d}\omega}=-\frac{4\alpha (3 \omega^2+3)-24\alpha \omega^2}{(3 \omega^2+3)^2}=0\Rightarrow\omega=\pm 1.
\end{gather*}
Remarkably, the value of $\omega$ where $H$ is maximal in this case does not depend on $\alpha$. Therefore, any SVM with $h(x)=1+\alpha x$ and $V(t)=\sin(\omega t)$ will have a maximal enclosed area in the hysteresis loop at $\omega=1$. In dimensional units this means that $H$ is maximal if $\omega\tau=1$. This is again a general statement which should then also apply to a specific physical example of a conical channel, which we saw in Fig.~\ref{fig:conehourglass}(a) is well described by $h(x)=1+\alpha x$. In Fig.~\ref{fig:Hfig}(a) we show $H$ as a function of $\omega$, where we clearly see it indeed peaks at $\omega=1$. In Fig.~\ref{fig:Hfig}(b) we show the enclosed surface area in the $I-V$ diagram determined through physical continuum transport FE calculations of a conical channel.

To convert dimensional results form the FE calculations to the dimensionless $\omega$ used in Fig.~\ref{fig:Hfig}(b) we need the timescale $\tau$ that we used to convert to dimensionless time units. In Ref.~\cite{kamsma2023iontronic} an expression was derived for an estimate of $\tau$, however this is an approximation and not necessarily the precise value. To make the best comparison between the mathematical model and a specific realisation of an SVM, we estimate $\tau$ empirically. The steady-state conductance $h_{\infty}(V_{\mathrm{s}}(t))$ and equilibrium conductance $g_0$ of a conical channel are known (full details in the Appendix) \cite{boon2021nonlinear} so the only unknown for evaluating $\tau\dot{g}_{\mathrm{s}}(t)=g_0h_{\infty}(V_{\mathrm{s}}(t))-g_{\mathrm{s}}(t)$ to obtain dynamic conductance $g_{\mathrm{s}}(t)$ is the timescale $\tau$. Therefore we can treat $\tau$ as a fit parameter that yields the optimal solution $g_{\mathrm{s}}(t)$ to best fit the empirically found conductance with full FE calculations of the microscopic physical equations.

By comparing Figs.~\ref{fig:Hfig}(a) and \ref{fig:Hfig}(b) we indeed see a striking similarity. Both functions exhibit a nearly identical function form and the $H$ from the conical channel also peaks approximately at $\omega=1$ as predicted. Zooming in reveals that the peak in $H$ from the FE calculations is slightly offset to the left of $\omega=1$, which could indicate a more detailed agreement with Eq.~(\ref{eq:H}) as higher order terms in $h(x)$ slightly shift the peak of $H$ to $\omega<1$. Even though the conical channel is well described by a linear $h(x)$, there are in reality higher order terms present in the physical $h(x)$.

\section{SVM spiking circuit}\label{sec:circuit}
Thus far we have focused on the properties of a single SVM. Now we will turn our attention to an SVM based neuromorphic spiking circuit which was shown to exhibit key features of neuronal signalling \cite{kamsma2023iontronic}. The four dimensional physical circuit equations in Ref.~\cite{kamsma2023iontronic} can be reduced to a two-dimensional dynamical system featuring a vastly simplified parameter space, thereby offering the combination of a physically plausible system while remaining mathematically treatable. The emerging function forms bear a resemblance to that of the FitzHugh-Nagumo (FN) \cite{fitzhugh1961impulses,nagumo1962active} and Morris-Lecar (ML) \cite{morris1981voltage} models, but it is actually a different system as we will see. We first analyse the two-dimensional dynamical system that describes the circuit in Sec.~\ref{sec:detcircuit}. To make it explicit that our dynamical system corresponds to a physical system, we convert the physical governing system of equations \cite{kamsma2023iontronic} to its reduced dimensionless two-dimensional form in Sec.~\ref{sec:connection}. Then, in Sec.~\ref{sec:stoch} we show that the natural presence of voltage noise \cite{sarpeshkar1993white} can induce stochastic spiking, thereby presenting additional features of neuronal signaling \cite{teich1989fractal,shadlen1998variable,baddeley1997responses,nawrot2008measurement} from essentially the same circuit.
\begin{figure*}[ht]
	\centering
	\includegraphics[width=0.75 \textwidth]{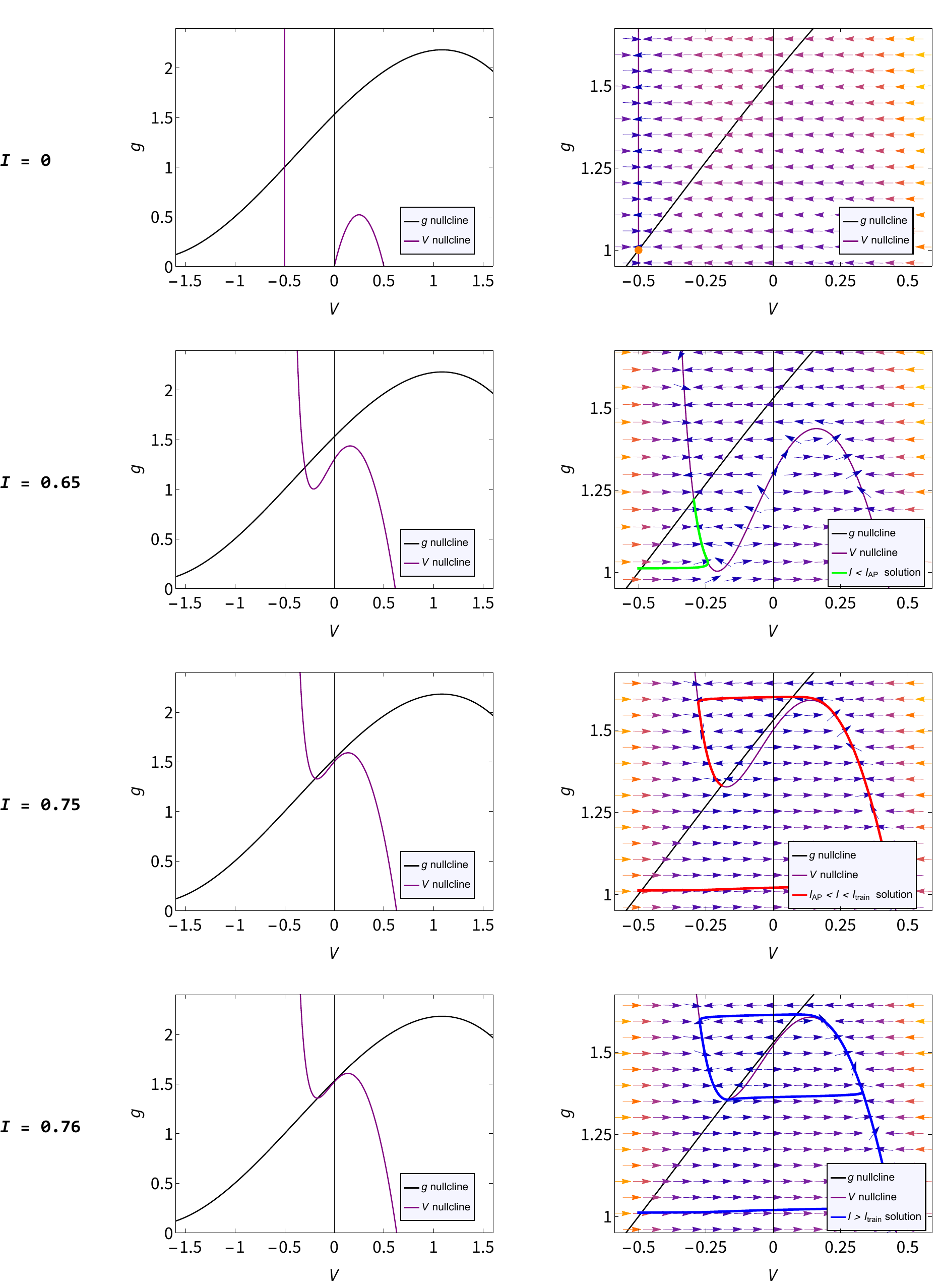}
	\caption{Phase portraits for Eqs.~(\ref{eq:mathCirc}) and (\ref{eq:mathCircg}) for no stimulus $I=0$, a subcritical stimulus $I=0.65$, a subcritical stimulus such that one spike occurs $I=0.75$, and a supercritical stimulus $I=0.76$. On the left we show the nullclines for $V$ (purple) and $g$ (black) for a wide regime. On the right we zoomed in on the physically relevant domain in the phase space, where we also show the vector field $(\dot{V},\dot{g})$ and the solutions that initiate from the steady-state when $I=0$ (indicated by the orange dot in the top-right figure).}
	\label{fig:phasePortraits}
\end{figure*}

\subsection{\emph{Deterministic} SVM spiking circuit}\label{sec:detcircuit}
Consider the circuit schematically illustrated in Fig.~\ref{fig:deterministicCircuit}(a) containing an SVM with conductance $g(t)$, over which a voltage $V(t)$ forms. In total there are four contributions to the dimensionless current in the circuit, given by i) a capacitive current $\tau^*\dot{V}$, ii) an imposed stimulus current $I$, iii) a combined current $aF(V(t))$ through two current rectifiers connected to individual batteries, and iv) a current $g(t)(V(t)-E)$ through the SVM. Therefore there are only two dynamical variables in this circuit, $V(t)$ and $g(t)$. With Kirchhoff's law we obtain Eq.~(\ref{eq:mathCirc}) for $V(t)$ and as per Def.~\ref{def:SVMdef}, we know that $g(t)$ described by Eq.~(\ref{eq:mathCircg}), yielding the following system of equations that describe the SVM spiking circuit

\begin{numcases}{}
    \tau^{*}\dfrac{\dd V(t)}{\dd t}=I-g(t)\left(V(t)-E\right)+aF(V(t)),\label{eq:mathCirc}\\
    \hspace{0.38 cm}\dfrac{\dd g(t)}{\dd t}= h(E-V(t))-g(t),\label{eq:mathCircg}
\end{numcases}
\begin{figure*}[ht]
	\centering
	\includegraphics[width=1\textwidth]{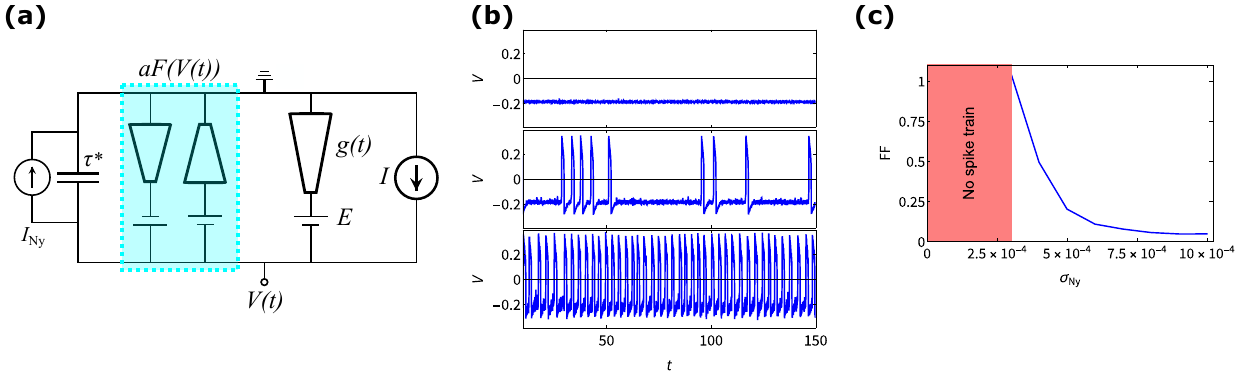}
	\caption{\textbf{(a)} Schematic representation of the SVM spiking circuit corresponding to Eqs.~(\ref{eq:dV}) and (\ref{eq:dg}). The noise is incorporated as an extra current $I_{\mathrm{Ny}}$ through the circuit. This figure is adapted from Ref.~\cite{kamsma2023iontronic}. \textbf{(b)} Spiking behaviour when the stimulus $I$ is subcritical (i.e.\ no spike train in deterministic case). When the noise is weak with $\sigma_{\mathrm{Ny}}=1\cdot10^{-4}$ the system exhibits no spikes (top), for intermediate noise with $\sigma_{\mathrm{Ny}}=4\cdot10^{-4}$ we see irregular spiking (middle) and for strong noise with $\sigma_{\mathrm{Ny}}=10\cdot10^{-4}$ the circuit exhibits a regular spike train (bottom). \textbf{(c)} The Fano Factor (FF) for various noise strengths, showing a regime where no spikes emerge (in the solved interval $t\in[0,10^3]$), a regime where spikes emerge irregularly resulting in a high FF, and a regime where we observe regular spiking which translates into a low FF.}
	\label{fig:StochCircuit}
\end{figure*}

where $F(x)=Gx-(Gx)^3/3$, $h(x)=1 - 1.07\,x + 0.06\,x^2 + 0.167\,x^3$ (as described in Sec.~\ref{sec:connection}), with $x\in\mathbb{R}$, and lastly $I\in\mathbb{R}$ is a control parameter, typically of order unity. Here we will use $\tau^{*}=0.01$, $E=-0.5$, $a=0.6$, and $G=3.46$. The system of Eqs.~(\ref{eq:mathCirc}) and (\ref{eq:mathCircg}) seems reminiscent of the FN and ML models \cite{fitzhugh1961impulses,nagumo1962active}, but the coupled $g(t)V(t)$ term is not present in the FN model. The ML model \cite{morris1981voltage} does feature such a coupled term, but lacks the simple cubic function $F(x)$. Therefore Eqs.~(\ref{eq:mathCirc}) and (\ref{eq:mathCircg}) represent a distinct model, which directly corresponds to a physical circuit as we show in Sec.~\ref{sec:connection}, and hence is physically fully plausible.

In Fig.~\ref{fig:deterministicCircuit}(b) we show three distinct solutions $V(t)$ for stimuli that undergo a step function from $I=0$ to $I=0.65$, $I=0.75$, and $I=0.76$, corresponding to no spike (green), a single isolated spike (red) and a spike train (blue), respectively. Thus we reproduce the earlier reported results \cite{kamsma2023iontronic} with Eqs.~(\ref{eq:mathCirc}) and (\ref{eq:mathCircg}). In Fig.~\ref{fig:deterministicCircuit}(c) the different regimes for different stimuli are clearly visible in a bifurcation diagram, showing a Hopf bifurcation at $I=I_{\mathrm{train}}\approx 0.7595$ and $I\approx 0.791$. The Jacobian of Eqs.~(\ref{eq:mathCirc}) and (\ref{eq:mathCircg}) is
\begin{align*}
    \mathbf{J}(V,g)=\begin{pmatrix}
        \frac{-g+aF^{\prime}(V)}{\tau^{*}}&-\frac{V-E}{\tau^{*}}\\
        -h^{\prime}(E-V)&-1
    \end{pmatrix}.
\end{align*}
Upon evaluating $\mathbf{J}(V,g)$ at the steady point(s) we see that indeed a Hopf bifurcation manifests at two distinct parameter values. The conjugate pair of eigenvalues crosses the imaginary axis for a stimulus $I\approx 0.7595$, rendering the real parts of the eigenvalues positive and marking the onset of oscillatory behaviour in the system around two new unstable steady points. A second Hopf bifurcation transpires at $I\approx 0.791$, where the real parts of the eigenvalues revert to negative values, indicating a return to a single stable steady point.

Eqs.~(\ref{eq:mathCirc}) and (\ref{eq:mathCircg}) feature the nullclines
\begin{numcases}{}
    f_{V}(V)=\frac{I+aF(V)}{V-E},\label{eq:vNull}\\
    f_{g}(V)= h(E-V),\label{eq:gNull}
\end{numcases}
where we denoted the $V$-nullcline  with $f_{V}$ and the $g$-nullcline with $f_{g}$. The difference with the FN model is particularly clear in Eq.~(\ref{eq:vNull}), which diverges around $V=E$ in our case due to the coupled $g(V-E)$ term in Eq.~(\ref{eq:mathCirc}). In Fig.~\ref{fig:phasePortraits} we show the nullclines for 4 different values of stimulus $I$, corresponding to no stimulus $I=0$, a subcritical stimulus $I=0.65$, a subcritical stimulus such that one spike occurs $I=0.75$, and a supercritical stimulus $I=0.76$. On the right hand side we zoom in on the physically relevant regime in the phase space, where we also show the vector field $(\dot{V},\dot{g})$ and the solutions that initiate from the steady-state when $I=0$ (indicated by the orange dot in the top-right figure). This reveals three distinct traces, one where the system directly settles to a new steady-state (green), one where the solution traverses the phase space once before settling (red) and a periodic solution (blue).  The green, red, and blue trajectories we show in Fig.~\ref{fig:phasePortraits} correspond to the green, red, and blue voltage traces we show in Fig.~\ref{fig:deterministicCircuit}(b).

\subsubsection{Physical fluidic spiking circuit equations}\label{sec:connection}

\begin{figure*}[ht]
	\centering
	\includegraphics[width=1\textwidth]{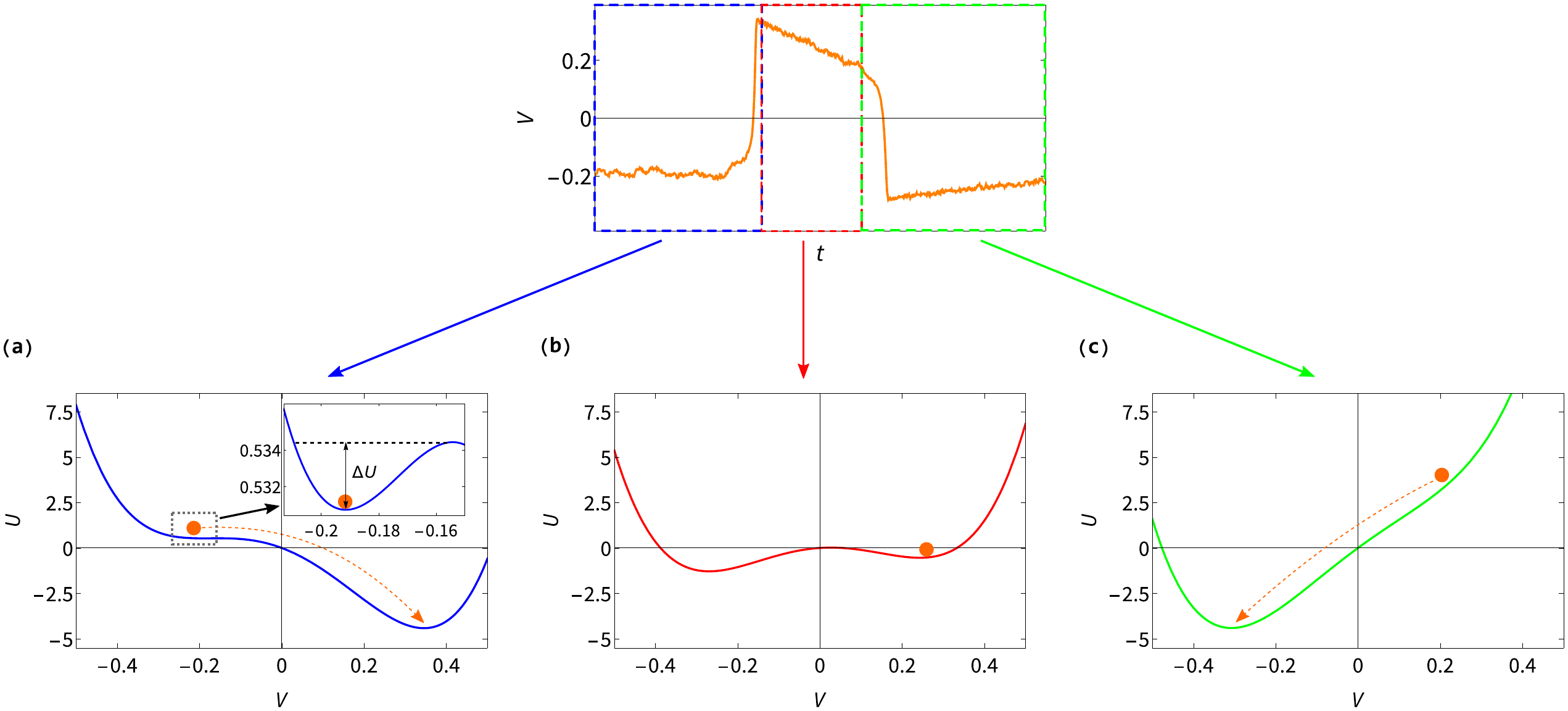}
	\caption{The potential $U^{*}(V,V_{\mathrm{g}})$ as in Eq.~(\ref{eq:Ustar}) for a subcritical stimulus $I=0.75$ for three different $V_{\mathrm{g}}$ corresponding to one (fluctuation induced) spike. \textbf{(a)} The quasi-stable state $V_{\mathrm{g}}=V^{*}$ case which features a local well of depth $\Delta U$ which can be spontaneously surpassed with a sufficiently strong fluctuation, driving $V$ to the deeper (temporary) right-side well. \textbf{(b)} An intermediate case $V_{\mathrm{g}}=0$ where $V_{\mathrm{g}}$ has increased from $V^{*}$ to 0, showing that the right-side well is disappearing while the left-side well has started appearing. \textbf{(c)} The right-side well completely disappears for $V_{\mathrm{g}}=V^{+}$ with $V^{+}$ the maximal $V$ during a spike, driving $V$ back to $V^{*}$.}
	\label{fig:PotentialPlots}
\end{figure*}

The system of Eqs.~(\ref{eq:mathCirc}) and (\ref{eq:mathCircg}) is directly derived from a theoretically proposed physical fluidic iontronic circuit \cite{kamsma2023iontronic}. The neuromorphic circuit in Ref.~\cite{kamsma2023iontronic} contains three conical ion channels (two with fast dynamics, one with slow), with $g_{\mathrm{s}}$ the conductance of the slow channel, and a capacitor over which a membrane potential $V_{\mathrm{m}}$ forms. This circuit was described by four differential equations, corresponding to the three channels and the potential $V_{\mathrm{m}}$. However, the two fast channels respond sufficiently quickly that they can be treated as instantaneous current rectifiers, rather than dynamical memristors. Consequently, the circuit dynamics were also shown to be well described by the following (dimensional) system of equations
\begin{numcases}{}
    \tau_{\mathrm{m}}\dfrac{\dd V_{\mathrm{m}}(t)}{\dd t}=\frac{I(t)}{g_{0}}-\frac{g_{\mathrm{s}}(t)}{g_{0}}\left(V_{\mathrm{m}}(t)-E_{\mathrm{s}}\right)+\frac{g_{\mathrm{r}}}{g_{0}}F(V_{\mathrm{m}}(t)),\label{eq:physCirc}\\
    \tau\dfrac{\dd g_{\mathrm{s}}(t)}{\dd t}= g_0h_{\infty}(-V_{\mathrm{m}}(t)+E_{\mathrm{s}})-g_{\mathrm{s}}(t),\label{eq:physchannel}
\end{numcases}
where Eq.~(\ref{eq:physCirc}) follows from Kirchhoff's law and describes the total current through the circuit and Eq.~(\ref{eq:physchannel}) describes the dynamical conductance of the slow channel in the circuit. Here, $I$ is a stimulus current, $E_{\mathrm{s}}=-0.5\text{ V}$ is a battery potential, and $g_{\mathrm{r}}=1\text{ pS}$ is a characteristic conductance scale. The membrane response RC-like time is denoted by $\tau_{\mathrm{m}}=C/g_{0}$ (in Ref.~\cite{kamsma2023iontronic} $\tau_{\mathrm{m}}$ is defined as $C/g_{\mathrm{r}}$, which is of similar magnitude). The function $F$ is the total current through the two fast channels given by
\begin{align}\label{eq:Fappr}
    F(V_{\mathrm{m}}(t))= GV_{\mathrm{m}}(t)-\frac{(GV_{\mathrm{m}}(t))^3}{3V_{\mathrm{r}}^2},
\end{align}
where $V_{\mathrm{r}}=1\text{ V}$ is a characteristic voltage scale and $G=3.46$

Let us then rewrite these equations such that they become dimensionless. We define $t\to t/\tau$, $\tau^{*}=\tau_{\mathrm{m}}/\tau$, $V_{\mathrm{m}}(t)/V_{\mathrm{r}}\to V(t)$, $I(t)/(g_{0}V_{\mathrm{r}})\to I$, $E_{\mathrm{s}}/V_{\mathrm{r}}\to E$, $g=g_{\mathrm{s}}(t)/g_{\mathrm{0}}$, and $a=g_{\mathrm{r}}/g_0$. Additionally, to be in agreement with Eq.~\ref{eq:h} we define $h(E-V(t))$ in this Section as the third order expansion of $h_{\infty}(V_{\mathrm{r}}(E-V(t)))$ given in Eq.~(\ref{eq:rhos}), i.e.\
\begin{align*}
    h(V(t))=&h_{\infty}(V_{\mathrm{r}}E)+\sum_{i=1}^{3}\frac{1}{i!}\left.\dfrac{\dd h_{\infty}(V_{\mathrm{r}}(E-V))}{\dd V}\right|_{V=0}(V(t))^i\\
    =&h_{\infty}(V_{\mathrm{r}}E)+\sum_{i=1}^{3}\frac{V_{\mathrm{r}}(-1)^i}{i!}\left.\dfrac{\dd h_{\infty}(\nu)}{\dd \nu}\right|_{\nu=E}(E-V(t))^i\\
    \approx&1 - 1.07\,V(t) + 0.06\,(V(t))^2 + 0.167\,(V(t))^3.
\end{align*}
where we used $\nu=V_{\mathrm{r}}(E-V)$. This yields the function $h$ used in Eq.~(\ref{eq:mathCircg})
\begin{align}\label{eq:hg}
    h(x)=1 - 1.07\,x + 0.06\,x^2 + 0.167\,x^3.
\end{align}
With this we arrive at the dimensionless Eqs.~(\ref{eq:mathCirc}) and (\ref{eq:mathCircg}) and effectively simplify the originally extensive parameter space to a simple two dimensional system that contains only 4 parameters. 

\subsection{\emph{Stochastic} SVM spiking circuit}\label{sec:stoch}
Thermal noise is an intrinsic element of RC circuits \cite{sarpeshkar1993white}, often referred to as Nyquist noise \cite{johnson1928thermal,nyquist1928thermal}. Since we are essentially working with an RC-circuit, there will be thermal fluctuations in the circuit voltage of order \cite{sarpeshkar1993white} $\sqrt{k_{\mathrm{B}}T/C}\approx 3.7 \text{mV}$, with $k_{\mathrm{B}}$ the Boltzmann constant, $T=293.15$ K room temperature, and $C=0.3$ fF the capacitance. A natural way to model the presence of voltage noise is by adding Brownian fluctuations to the ODE (\ref{eq:mathCirc}).  As we shall show, the solutions of the corresponding Stochastic Differential Equations (SDEs) present interesting modes of neuronal spiking, not found in the deterministic case.
\begin{figure}[ht]
	\centering
	\includegraphics[width=0.5\textwidth]{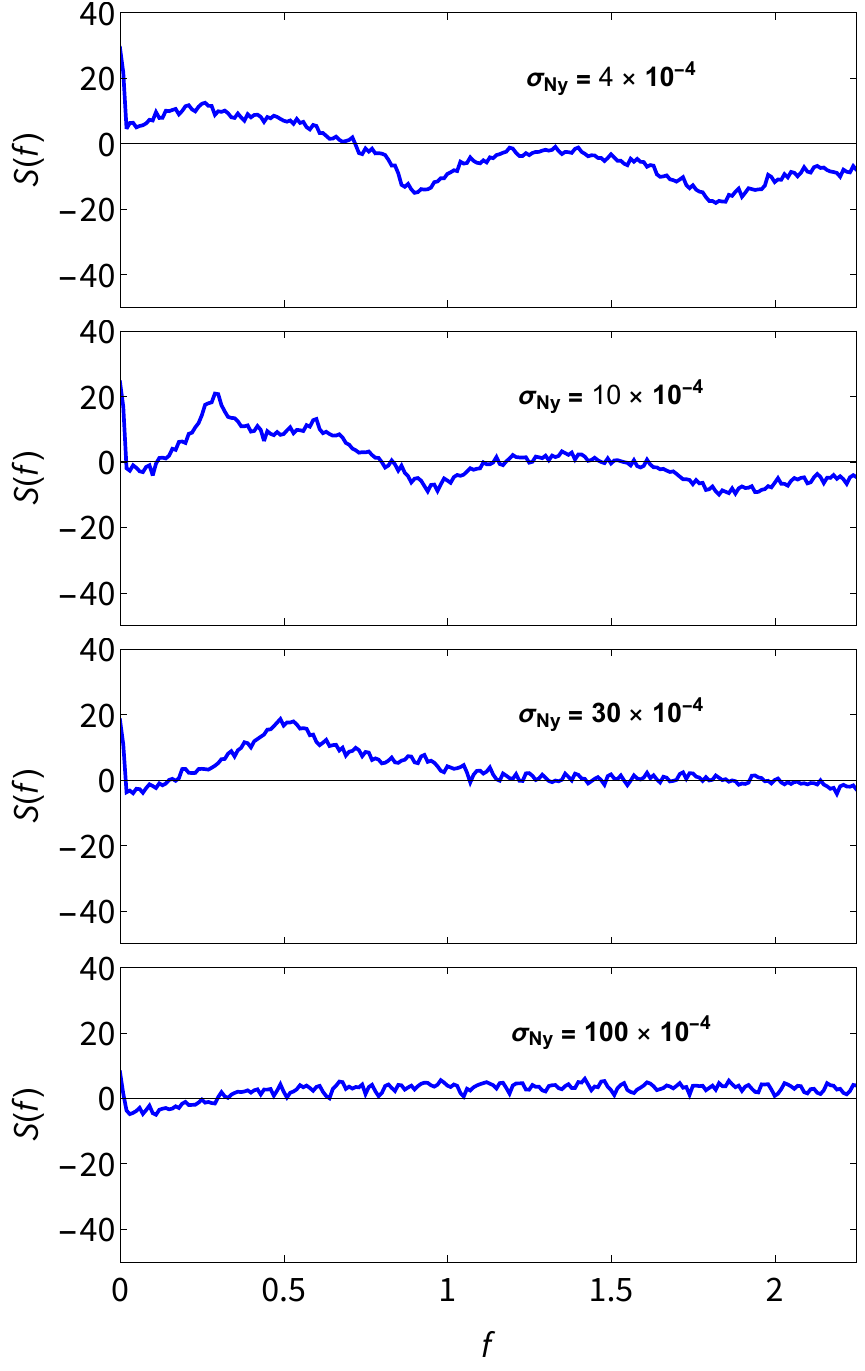}
	\caption{Power spectra $S(f)$, with $f$ the frequency, of the solutions generated by the SDEs (\ref{eq:dV}) and (\ref{eq:dg}) for various noise strengths $\sigma_{\text{Ny}}$. For $\sigma_{\text{Ny}}=4\cdot10^{-4}$ we find sporadic spiking, resulting in a spread out power spectrum. For $\sigma_{\text{Ny}}=10\cdot10^{-4}$ and $\sigma_{\text{Ny}}=30\cdot10^{-4}$ we find more regular spiking, visible as a clear spike in the power spectrum. Lastly, for $\sigma_{\text{Ny}}=100\cdot10^{-4}$ the system is dominated by noise and consequently no clear peak emerges anymore.}
	\label{fig:PSPlots}
\end{figure}

We incorporate the noise by the replacement $V(t)\to V(t)+\sigma_{\mathrm{Ny}} \dd W_{\mathrm{t}}$ in Eq.~(\ref{eq:mathCirc}), which would result in an extra noise current of magnitude $\dd I_{\mathrm{Ny}}=(g+h_{+}(V)+h_{-}(V))\sigma_{\mathrm{Ny}}\dd W_{\mathrm{t}}/\sqrt{\tau^{*}}=\epsilon(V)\dd W_{\mathrm{t}}$, where $h_{\pm}(V)$ is the instantaneous conductance of the two channels that form the current $aF(V(t))$, approximated by a third order expansion of the physical functions from Ref.~\cite{kamsma2023iontronic}, laid out in the Appendix. Additionally, we can write Eq.~(\ref{eq:mathCirc}) in terms of a time-dependent potential $U^{\prime}(V(t))=-\frac{1}{\tau^{*}}\left[I-g(t)\left(V(t)-E\right)+aF(V(t))\right]$. This yields the following compact stochastic differential equations
\begin{numcases}{}
    \dd V(t)=-U^{\prime}(V(t))\dd t+\epsilon(V(t))\dd W_{\mathrm{t}},\label{eq:dV}\\
    \dd g(t)= \left[h(E-V)-g(t)\right]\dd t,\label{eq:dg}
\end{numcases}
where $\epsilon(V):=\frac{1}{\sqrt{\tau^{*}}}(g+h_{+}(V)+h_{-}(V))\sigma_{\mathrm{Ny}}$.

If we consider voltage noise in the circuit, we would obtain the equivalent circuit schematically depicted in Fig.~\ref{fig:StochCircuit}(a), which is the circuit we showed in Fig.~\ref{fig:deterministicCircuit}(a) adapted to include a noise current component $I_{\mathrm{Ny}}$. In the deterministic limit $\sigma_{\mathrm{Ny}}\to0$ we then see that the noise current $I_{\mathrm{Ny}}\to0$ such that we recover the original circuit and the results presented in Fig.~\ref{fig:deterministicCircuit}. In Fig.~\ref{fig:StochCircuit}(b) we show solutions for Eqs.~(\ref{eq:dV}) and (\ref{eq:dg}) solved on the interval $t\in[0,10^3]$ with $I=0.75<I_{\mathrm{train}}$ (i.e.\ subcritical) for weak $\sigma_{\mathrm{Ny}}=1\cdot10^{-4}$, intermediate $\sigma_{\mathrm{Ny}}=4\cdot10^{-4}$, and strong noise $\sigma_{\mathrm{Ny}}=10\cdot10^{-4}$ in the top, middle, and bottom figure, respectively. For weak noise the system essentially shows the same subcritical behaviour as the deterministic case, i.e.\ Fig.~\ref{fig:deterministicCircuit}(b,green), exhibiting no spikes in the interva $t\in[0,10^3]$. When we move on to intermediate noise then we observe a new response in the form of irregular spiking, rather than the periodic ``tonic'' firing that we observe in Fig.~\ref{fig:deterministicCircuit}(b,blue). Then for strong noise we retrieve essentially regular tonic spiking, even though the stimulus is subcritical. 

By reviewing Fig.~\ref{fig:StochCircuit}(b) it is clear that various forms of noise induce various forms of spiking with differing regularities. The Fano Factor (FF) is a measure of the variability of a stochastic process, which in the context of neuronal spiking is given by the ratio of the variance and the squared expectation value of the Inter-Spike Interval ISI (the time between subsequent spikes) \cite{dayan2005theoreticalFF}, thus $\text{FF}=\mathrm{Var}(\mathrm{ISI})/(\mathrm{E}\left[\mathrm{ISI}\right])^2$. When we look at the corresponding FF for various strengths of noise we obtain Fig.~\ref{fig:StochCircuit}(c). Initially the noise is not strong enough to induce spiking in the interval $t\in[0,10^3]$, after which sporadic spikes or bursts occur resulting in a high FF. Lastly, for stronger noise the essentially tonic spiking results in a vanishing FF. In biological neuronal spiking it is common for neurons to exhibit a non-zero FF \cite{teich1989fractal,shadlen1998variable,baddeley1997responses,nawrot2008measurement}, therefore the inclusion of noise effectively captures another characteristic of neuronal signalling.

We can make sense of the stochastic features we observe in Fig.~\ref{fig:StochCircuit} by analysing the potential $U$ in Eq.~(\ref{eq:dV}). Recall that $\tau^{*}\ll 1$, the resulting fast-slow characteristic of Eqs.~(\ref{eq:dV}) and (\ref{eq:dg}) ensures that $V$ exhibits much faster dynamics than $g$. This is especially visible in Fig.~\ref{fig:phasePortraits}, where we see that the vectors $(\dot{V},\dot{g})$ are essentially horizontal and in the solutions we see that $g$ remains essentially constant when $V$ changes from a negative to positive state (or vice versa). We can use this separation in timescale to simplify the coupled relation of Eqs.~(\ref{eq:dV}) and (\ref{eq:dg}) in this context by considering $g$ to be equal to its steady-state function $h(V_{\mathrm{g}})$ for a given $V_{\mathrm{g}}$, yielding the following equation
\begin{align}\label{eq:Ustar}
    U^{*}(V,V_{\mathrm{g}})=-\frac{1}{\tau^{*}}\int\left[I-h(V_{\mathrm{g}})\left(V-E\right)+aF(V)\right]\dd V.
\end{align}
We will consider $U^{*}$ in the subcritical case $I=0.75$ used for Fig.~\ref{fig:StochCircuit}(b), such that the deterministic case is stable at $V^{*}\approx-0.187$. Without any noise the system will remain in its corresponding stable state $(V^{*}, h(V^{*}))$. The potential $U^{*}(V^{*},V^{*})$ is shown in Fig.~\ref{fig:PotentialPlots}(a) where we see that $(V^{*}, h(V^{*}))$ is a shallow stable steady state in an energy well of depth $\Delta U\approx3.7\cdot10^{-3}$. Once spikes start appear in the solved time interval, we indeed find a sufficiently short Kramers' time. E.g.\, the Kramers' time $T_\mathrm{Kramers}=e^{2\Delta U/\epsilon(V^{\star})}\approx 1.1$ for the intermediate noise strength used for Fig.~\ref{fig:StochCircuit}(b, middle). 

If a spike is induced then $V$ increases to $V^{+}$ quickly compared to $g$ since $\tau^{*}\ll 1$. Consequently there is a period where the system (approximately) is in the state $(V^{+},h(V^{*}))$ (which is the right side deeper well in Fig.~\ref{fig:PotentialPlots}(a)), and not in $(V^{+},h(V^{+}))$. This features translates into the all-or-none behaviour of the action potentials. Either the stimulus is not strong enough to exceed the energy barrier $\Delta U$, i.e.\ no spike occurs, or the system completely transitions to the second deeper well, i.e.\ a complete spike occurs. Then $V_{\mathrm{g}}$ will transition from $V^{*}$ to $V^{+}$ over a timescale of order $\mathcal{O}(1)$ (or $\mathcal{O}(\tau)$ in dimensional units), i.e.\ the typical timescale of the dynamics of $g$. In Fig.~\ref{fig:PotentialPlots}(b) we show an intermediate case where $V_{\mathrm{g}}=0$, the right-side well still exists, so $V$ remains in its positive state, but the left-side well has re-emerged. In Fig.~\ref{fig:PotentialPlots}(c) we plot $U^{*}$ for $V_{\mathrm{g}}=V^{+}$, so the case when $g$ has had time to transition to the state $h(V^{+})$. The right-side well has now completely disappeared and the system transitions back to the left-side well, eventually returning $V$ to its subcritical steady state $V^{*}$.

The results shown in Fig.~\ref{fig:PotentialPlots} explain the observed FF results shown in Figs.~\ref{fig:StochCircuit}(b) and \ref{fig:StochCircuit}(c). For a weak noise, the fluctuations are not strong enough to escape the shallow energy well of depth $\Delta U$ in the interval $t\in[0,10^3]$, so no spikes are found. For intermediate noise, the shallow energy well can be escaped, but only sporadically as the fluctuations are only rarely strong enough, resulting in irregular spiking with a high FF. Lastly for strong noise the fluctuations are strong enough to essentially always escape the shallow well whenever the system is in the state depicted in Fig.~\ref{fig:PotentialPlots}(a), thus quickly generating a new spike after the previous ended. This results into essentially regular spiking, and a low FF, despite the subcritical stimulus.

In Fig.~\ref{fig:PSPlots}, we present the power spectra derived from the solutions of the SDEs (\ref{eq:dV}) and (\ref{eq:dg}), displaying the influence of noise strength on the spectral characteristics of the system dynamics. Under weak noise $\sigma_{\text{Ny}}=4\cdot10^{-4}$, the power spectrum exhibits a spread out peak, reflecting the sporadic and irregular nature of spiking events. As the noise intensity increases to $\sigma_{\text{Ny}}=10\cdot10^{-4}$ and $\sigma_{\text{Ny}}=30\cdot10^{-4}$, a discernible sharper peak emerges in the power spectrum, indicative of a more structured periodicity in the system behaviour akin to the tonic firing of the deterministic system. Lastly, for very strong noise levels $\sigma_{\text{Ny}}=100\cdot10^{-4}$, the power spectrum shows no particular peaks due to the dominating influence of noise on the system dynamics.

\section{Conclusion}\label{sec:conclusion}
In summary, we established a concise and simple mathematical framework for Simple Volatile Memristors (SVMs), catering to a broad class of volatile memristors. We derive this framework, specifically Def.~\ref{def:SVMdef}, with a general volatile memristor in mind. Of particular interest is a novel family of fluidic memristors that use ions in an aqueous environment as signal carriers, inspired by the brain's aqueous ion transport \cite{yu2023bioinspired,khan2023advancement,hou2023learning,kim2023liquid,xie2022perspective}. Several of these fluidic devices, as well as naturally occurring memristors in plants \cite{markin2014analytical}, were found to be well described by our definition of SVMs \cite{robin2023long,kamsma2023iontronic,kamsma2024brain}, therefore the mathematical framework discussed here is of direct relevance to novel physical systems. Throughout Secs.~\ref{sec:typeIandII} and \ref{sec:area} we showed that our mathematical predictions materialised in full FE calculations of microscopic physical continuum transport equations that incorporate no prior knowledge of our mathematical definitions. Specifically, in Sec.~\ref{sec:typeIandII} we proved that symmetric SVMs will produce a current-voltage hysteresis loop that does not cross itself in the origin, while an asymmetric SVM (under reasonable constraints) will produce self-crossing hysteresis loops. These predictions are then demonstrated through FE calculations of asymmetric (conical) and symmetric (hourglass shaped) microfluidic ion channels which indeed produce self-crossing and non self-crossing hysteresis loops, respectively. In Sec.~\ref{sec:area} we presented a general equation for the enclosed area within one self-crossing loop and explain how this can be used to estimate the typical conductance memory timescale $\tau$ of an SVM. We show that the maximal enclosed area is found when $2\pi f\tau\approx1$, with $f$ the (dimensional) frequency of the applied (sinusoidal) voltage and $\tau$ the memory timescale of the device. For SVMs where the steady-state conductance is linear in the voltage, the relation $2\pi f\tau=1$ is exact. This is again shown to also materialise within physical continuum transport FE calculations of a conical microchannel, showing good agreement with the general mathematical predictions. 

Building on our SVM framework, we turned our attention to a proposed application of SVMs, a circuit that exhibits key features of neuronal signaling \cite{kamsma2023iontronic}, initially described by a coupled four-dimensional system of physical equations relying on a large parameter space. We first present this circuit as a two-dimensional system, with variables $g$ and $V$, of equations relying on merely 4 parameters, which we explicitly show to be completely equivalent the original physical equations. This mathematically treatable system is first characterised by investigating its phase portraits revealing that it is in fact rather different from the seemingly similar FitzHugh-Nagumo \cite{fitzhugh1961impulses,nagumo1962active} and Morris-Lecar models \cite{morris1981voltage}. We then show that we can use the intrinsic instability of the system in combination with stochastic effects, naturally present in any circuit \cite{sarpeshkar1993white}, to extract features of irregular spiking not found in the deterministic case. These features are well understood through a mathematical analysis of the underlying equations, revealing a double-well energy potential structure that requires sufficiently strong noise to spontaneously escape a local minimum, thereby inducing a spike. Our work here provides insights into a physically plausible description of a neuromorphic spiking system, thus offering the combination of a mathematically treatable two-dimensional system of equations that has an explicit and direct correspondence to physical equations. 

For future prospects, the spiking circuit could be expanded to feature more SVMs. An expanded (physical) SVM circuit has already been proposed that features additional bursting spiking modes \cite{kamsma2024advanced}, which could undergo a similar mathematical reduction to a (in this case three-dimensional) dynamical system. Additionally, the simple mathematical model that described the SVM circuit paves the way for mathematically modeling networks of SVM based spiking ``neurons". Lastly, the emergence of spiking in SVM based circuits is sensitive to parameter changes \cite{kamsma2023iontronic,kamsma2023unveiling,kamsma2024advanced}. The mathematical model we provide here could be used to find more stable conditions for spiking which can then (possibly) be reverse engineered to be featured in actual devices, where the clear and explicit link we provide between our mathematical model and physical quantities can be of help.

In conclusion, by defining a relatively simple mathematical framework we are able to make some remarkable powerful statements on a wider class of memristive devices. We make explicit that our work pertinent to actual physical systems by explicitly transforming the system parameters to the physical ones that describe fluidic microchannels and by showing that our general mathematical predictions materialised in physical simulations on multiple occasions. This places our work at an interesting intersection between mathematics and physics in the context of the emerging field of iontronic neuromorphics, where insights are gained through mathematical analysis while retaining a clear and explicit connection to specific and relevant physical systems, devices, and applications. Our results here provide general tools for the characterisation of SVMs and spiking circuit applications thereof, specifically in the quickly emerging fascinating direction of fluidic iontronic devices.

\section*{Declaration of competing interest}
The authors declare that they have no known competing financial interests or personal relationships that could have appeared to influence the work reported in this paper.

\section*{Data availability}
All data required to convey scientific findings described in the paper are present in the main text or the Appendix.

\begin{acknowledgments}
We thank Dr.\ Jaehyun Kim and Prof.\ Jungyul Park for providing us with Fig.~\ref{fig:IntroFig}(c).
\end{acknowledgments}

\clearpage
\onecolumngrid
\appendix
\section{Appendix}
\subsection{Poisson-Nernst-Planck-Stokes equations}\label{sec:PNP}

In order to demonstrate the realisation of our mathematical predictions in Secs.~\ref{sec:typeIandII} and \ref{sec:area}, we conducted finite-element (FE) calculations using the FE analysis package COMSOL \cite{multiphysics1998introduction,pryor2009multiphysics}. To do so, we consider an azimuthally symmetric single conical channel, schematically depicted in Fig.~\ref{fig:cone}, of length $L=10\text{ }\mu\text{m}$ with the central axis at radial coordinate $r=0$ and a radius described by $R(x)=R_{\mathrm{b}}-x\Delta R/L$ for $x\in\left[0,L\right]$ where $R_{\mathrm{b}}=200\text{ nm}$ at $x=0$ and $R_{\mathrm{t}}=R_{\mathrm{b}}-\Delta R=50\text{ nm}$ at $x=L\gg R_{\mathrm{b}}$. For the hourglass channel results, the geometry we constructed consists of two conical channels of length $L=5\text{ }\mu\text{m}$ connected by the tip, with all other parameters identical. The channel connects two bulk reservoirs of an incompressible aqueous 1:1 electrolyte with viscosity $\eta=1.01\text{ mPa}\cdot\text{s}$, mass density $\rho_{\mathrm{m}}=10^3\text{ kg}\cdot\text{m}^{-3}$ and electric permittivity $\epsilon=0.71\text{ nF}\cdot\text{m}^{-1}$, at the far side of both reservoirs we impose a fixed pressure $P=P_0$ and fixed ion concentrations $\rho_{\pm}=\rho_{\mathrm{b}}=1\text{ mM}$. The channel wall carries a uniform surface charge density $e\sigma=-0.02\;e\text{nm}^{-2}$, screened by an electric double layer with Debye length $\lambda_{\mathrm{D}}\approx10\text{ nm}$, that imposes an electric surface potential of $\psi_0\approx -40\text{ mV}$. The ions have diffusion coefficients $D_{\pm}=D=1\text{ }\mu\text{m}^2\text{ms}^{-1}$ and charge $\pm e$ with $e$ the proton charge. On the far side of the reservoir connected to the base we impose an electric potential $V(t)$, while the far side of the other reservoir is grounded, which leads to an electric potential profile $\Psi(x,r,t)$, an electro-osmotic fluid flow with velocity field $\mathbf{u}(x,r,t)$ and ionic concentrations and fluxes $\rho_\pm(x,r,t)$ and $\mathbf{j}_{\pm}(x,r,t)$, respectively.

The transport dynamics are assumed to be described by the Poisson-Nernst-Planck-Stokes (PNPS) Eqs.~(\ref{eq:poisson})-(\ref{eq:stokes}) given by 
\begin{gather}
	\nabla^2\Psi=-\frac{e}{\epsilon}(\rho_+-\rho_-),\label{eq:poisson}\\
	\dfrac{\partial\rho_{\pm}}{\partial t}+\nabla\cdot\mathbf{j}_{\pm}=0,\label{eq:ce}\\
 \mathbf{j}_{\pm}=-D_{\pm}\left(\nabla\rho_{\pm}\pm\rho_{\pm}\frac{e\nabla \Psi}{k_{\mathrm{B}}T}\right)+\mathbf{u}\rho_{\pm},\label{eq:NP}\\
	\rho_{\mathrm{m}}\dfrac{\partial\mathbf{u}}{\partial t}=\eta\nabla^2\mathbf{u}-\nabla P-e(\rho_+-\rho_-)\nabla \Psi;\qquad\nabla\cdot\mathbf{u}=0.\label{eq:stokes}
\end{gather}
The Poisson Eq.~(\ref{eq:poisson}) accounts for electrostatics, the continuity Eq.~(\ref{eq:ce}) ensures the conservation of ions, the Nernst-Planck Eq.~(\ref{eq:NP}) incorporates the combination of Fickian diffusion, Ohmic conduction, and Stokesian advection, and finally the Stokes Eq.~(\ref{eq:stokes}) describes the force balance on the (incompressible) fluid. To make the system of Eqs.~(\ref{eq:poisson})-(\ref{eq:stokes}) closed we impose  boundary conditions of no-slip blocking flux and Gauss' law on the channel wall, i.e.\ $\mathbf{u}=0$, $\mathbf{n}\cdot\mathbf{j}_{\pm}=0$, and $\mathbf{n}\cdot\nabla\Psi=-e\sigma/\epsilon$ with $\mathbf{n}$ the wall's inward normal vector.
\begin{figure}[h]
	\centering
	\includegraphics[width=0.42\textwidth]{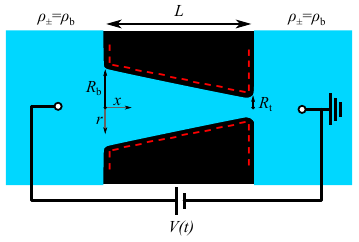}
	\caption{Schematic representation of an azimuthally symmetric conical channel of length $L$, base radius $R_{\mathrm{b}}$, and tip radius $R_{\mathrm{t}}<R_{\mathrm{b}}$, connecting two bulk reservoirs of a 1:1 aqueous electrolyte, with bulk concentrations $\rho_{\mathrm{b}}$. The channel wall carries a surface charge density $e\sigma$. An AC electric potential drop $V(t)$ over the channel drives an ionic charge current $I(t)=g_{\mathrm{s}}(t)V(t)$ with $g(V(t),t)$ the channel conductance. Figure adapted from Ref.~\cite{kamsma2023iontronic}.}
	\label{fig:cone}
\end{figure}
\subsection{Channel conductance}
The steady state conductance of a conical channel depends on the voltage-dependent salt concentration profile $\overline{\rho}_{\mathrm{s}}(x,V)$, that exhibits salt concentration polarisation upon an applied voltage. The consequent voltage-dependent steady-state channel conductance is described by \cite{boon2021nonlinear}
\begin{equation}\label{eq:rhos}
    \begin{aligned}
    h_{\infty}(V)=\frac{g_{\infty}(V)}{g_0}=&\int_0^L \overline{\rho}_{\mathrm{s}}(x,V) \dd x/(2\rho_{\rm{b}}L)\\
    =&1+\Delta g\int_0^{L}\left[\frac{x}{L}\frac{R_{\mathrm{t}}}{R(x)}-\frac{e^{\text{Pe}(V)\frac{x}{L}\frac{ R_{\mathrm{t}}^2}{R_{\mathrm{b}}R(x)}}-1}{e^{\text{Pe}(V)\frac{R_{\mathrm{t}}}{R_{\mathrm{b}}}}-1}\right]\dd x/L,
    \end{aligned}
\end{equation}
where $g_0=(\pi R_{\mathrm{t}} R_{\mathrm{b}}/L)(2\rho_{\rm{b}}e^2D/k_{\mathrm{B}}T)$, $\text{Pe}(V)=Q(V)L/(D\pi R_{\mathrm{t}}^2)$ the P\'{e}clet number at the narrow end, $Q(V)=-\pi R_{\mathrm{t}}R_{\mathrm{b}}\epsilon\psi_0/(\eta L)V$ the volumetric fluid flow  through the channel, and $\Delta g\equiv-e\Delta R\eta\sigma D/(\rho_{\mathrm{b}}R_{\mathrm{b}}R_{\mathrm{t}}\epsilon\psi_0k_{\mathrm{B}}T)$.

The dynamic (dimensional) conductance $g_{\mathrm{s}}(t)$ was found to be well described by \cite{kamsma2023iontronic}
\begin{align}
    \dfrac{\dd g_{\mathrm{s}}(t)}{\dd t}=\frac{g_0h_{\infty}(V(t))-g_{\mathrm{s}}(t)}{\tau},
\end{align}
with $\tau=L^2/12D$ the typical conductance memory time of the channel.

The circuit in Ref.~\cite{kamsma2023iontronic} contains three channels with a voltage dependent conductance. The ``fast" channels can be assumed to be instantaneous and are therefore not treated as dynamical equations in the circuit in this work, simplifying the system of equations to a two-dimensional one. To model the noise current we use the steady-state conductance of these channels, which we approximate by a third-order expansion of the physical equations from Ref.~\cite{kamsma2023iontronic}, this gives
\begin{align*}
    h_{+}(x)=&h_{+,\infty}(V_{\mathrm{r}}E_{\mathrm{+}})+\sum_{i=1}^{3}\frac{1}{i!}\left.\dfrac{\dd h_{+,\infty}(V_{\mathrm{r}}(E_+-V))}{\dd V}\right|_{V=0}x^i\approx 7.89+9.16x+6.76x^2+2.91x^3,\\
    h_{-}(x)=&h_{-,\infty}(V_{\mathrm{r}}E_{\mathrm{-}})+\sum_{i=1}^{3}\frac{1}{i!}\left.\dfrac{\dd h_{-,\infty}(V_{\mathrm{r}}(V-E_{-}))}{\dd V}\right|_{V=0}x^i\approx 7.89-9.16x+6.76x^2-2.91x^3.
\end{align*}
Note that these channels are normalized by $g_0$, which is the Ohmic conductance of the third ``slow" channel. Therefore $h_{\pm}(0)\neq 1$, unlike how we defined SVMs in Def.~\ref{def:SVMdef}. This is not an issue since these channels are not considered to be SVMs in the treatment of the circuit here, but rather as (instantaneous) current-rectifying circuit elements.

\subsection{Enclosed surface area $H$}\label{app:Hint}
The enclosed area $H$ inside a hysteresis loops, resulting rom a sweeping potential $V(t)$, that crosses itself in the origin and does not intersect itself anywhere else is given by
\begin{align*}
	H=&\left|\int_{nT}^{(n+1/2)T}I_{\mathrm{m}}(s)V'(s)\mathrm{d}s-\int_{(n+1/2)T}^{(n+1)T}I_{\mathrm{m}}(s)V'(s)\mathrm{d}s\right|\\
     =&\left|\int_{nT}^{(n+1/2)T}g(s)V(s)V'(s)\mathrm{d}s-\int_{(n+1/2)T}^{(n+1)T}g(s)V(s)V'(s)\mathrm{d}s\right|.
\end{align*}
The absolute value bars are there to account for either orientation of the hysteresis loop. We will consider a typical sinusoidal sweeping potential $V(t)=\sin(\omega t)$ and we focus on the first integral for simplicity, but all calculations below apply similarly to the second integral. Using the general solution Eq.~(\ref{eq:intsol}) we find
\begin{align*}
    \int_{nT}^{(n+1/2)T}g(s)V(s)V'(s)\mathrm{d}s=&\omega\int_{nT}^{(n+1/2)T}\cos(\omega s)\sin(\omega s)e^{-s}\left[\int_{0}^{s}h(V(t^{\prime}))e^{t^{\prime}}\mathrm{d}t^{\prime}+g(0)\right]\mathrm{d}s.
\end{align*}
From here on out we will ignore the $g(0)$ term as this represents a transient that vanishes due to the $e^{-s}$ term. Moreover, as discussed in  the main text we can choose $g(0)$ such that the solution for $g(t)$ is periodic without any transients. Therefore we can safely ignore transient terms here and that appear later on, as these cancel out with this choice of $g(0)$ that facilitates a periodic solution. Additionally we can ignore any $\beta_j$ terms in $h(V(t))$ as these even terms terms result in equal area terms of the two loop components. However, as we assume a self-crossing hysteresis loop, we subtract the second integral from the first integral due to the opposite orientations of the two loop components. Therefore the equal area terms cancel each other out and do not contribute to $H$. With these considerations we calculate
\begin{align*}
    \int_{nT}^{(n+1/2)T}g(s)V(s)V'(s)\mathrm{d}s=&\omega\sum_{k\in\mathds{O}}\alpha_{k}\int_{nT}^{(n+1/2)T}\cos(\omega s)\sin(\omega s)e^{-s}\int_{0}^{s}\sin(\omega t^{\prime})^{k}e^{t^{\prime}}\mathrm{d}t^{\prime}\mathrm{d}s\\
    =&\omega\sum_{k\in\mathds{O}}\alpha_{k}\int_{nT}^{(n+1/2)T}\cos(\omega s)\sin(\omega s)A_{k}\mathrm{d}s.
\end{align*}
Here $\mathds{O}$ is the set of odd integers and we denoted $A_{k}=e^{-s}\int_{0}^{s}\sin(\omega t^{\prime})^{k}e^{t^{\prime}}\mathrm{d}t^{\prime}$. As $A_k$ is the most convoluted term to calculate, let us focus solely on $A_k$, where we find that
\begin{align*}
    A_{k}=&e^{-s}\int_{0}^{s}\sin(\omega t^{\prime})^{k}e^{t^{\prime}}\mathrm{d}t^{\prime}\\
    =&e^{-s}\left[\left.\sin(\omega t^{\prime})^{k}e^{t^{\prime}}\right|_{0}^{s}-k\omega\int_{0}^{s}\cos(\omega t^{\prime})\sin^{k-1}(\omega t^{\prime})e^{t^{\prime}}\mathrm{d}t^{\prime}\right]\\
    =&e^{-s}\left[\sin^{k}(\omega s)e^{s}-k\omega\int_{0}^{s}\cos(\omega t^{\prime})\sin^{k-1}(\omega t^{\prime})e^{t^{\prime}}\mathrm{d}t^{\prime}\right]\\
    =&e^{-s}\left[\sin^{k}(\omega s)e^{s}-k\omega\left(\left.\cos(\omega t^{\prime})\sin^{k-1}(\omega t^{\prime})e^{t^{\prime}}\right|_{0}^{s}-\int_{0}^{s}\left(\omega(k-1)\cos^{2}(\omega t^{\prime})\sin^{k-2}(\omega t^{\prime})-\omega\sin(\omega t^{\prime})^{k}\right)e^{t^{\prime}}\mathrm{d}t^{\prime}\right)\right]\\
    =&e^{-s}\left[\sin^{k}(\omega s)e^{s}-k\omega\cos(\omega s)\sin^{k-1}(\omega s)e^{s}+\omega^2(k-1)k\int_{0}^{s}\cos^{2}(\omega t^{\prime})\sin^{k-2}(\omega t^{\prime})e^{t^{\prime}}\mathrm{d}t^{\prime}-\omega^2k\int_{0}^{s}\sin(\omega t^{\prime})^{k}e^{t^{\prime}}\mathrm{d}t^{\prime}\right]\\
    =&\sin^{k}(\omega s)-k\omega\cos(\omega s)\sin^{k-1}(\omega s)+\omega^2(k-1)ke^{-s}\int_{0}^{s}\cos^{2}(\omega t^{\prime})\sin^{k-2}(\omega t^{\prime})e^{t^{\prime}}\mathrm{d}t^{\prime}-\omega^2kA_k\\
    =&\sin^{k}(\omega s)-k\omega\cos(\omega s)\sin^{k-1}(\omega s)+\omega^2(k-1)ke^{-s}\int_{0}^{s}(1-\sin^{2}(\omega t^{\prime}))\sin^{k-2}(\omega t^{\prime})e^{t^{\prime}}\mathrm{d}t^{\prime}-\omega^2kA_k\\
    =&\sin^{k}(\omega s)-k\omega\cos(\omega s)\sin^{k-1}(\omega s)+\omega^2(k-1)ke^{-s}\int_{0}^{s}\sin^{k-2}(\omega t^{\prime})e^{t^{\prime}}\mathrm{d}t^{\prime}-\omega^2(k-1)ke^{-s}\int_{0}^{s}\sin^{k}(\omega t^{\prime})e^{t^{\prime}}\mathrm{d}t^{\prime}-\omega^2kA_k\\
    =&\sin^{k}(\omega s)-k\omega\cos(\omega s)\sin^{k-1}(\omega s)+\omega^2(k-1)ke^{-s}\int_{0}^{s}\sin^{k-2}(\omega t^{\prime})e^{t^{\prime}}\mathrm{d}t^{\prime}-\omega^2(k-1)kA_{k}-\omega^2kA_{k}.
\end{align*}
We see that we have expressed $A_k$ in terms of itself, which we can easily rearrange to find
\begin{align*}
    A_{k}=&\frac{\sin^{k}(\omega s)-k\omega\cos(\omega s)\sin^{k-1}(\omega s)}{1+\omega^2k+\omega^2(k-1)k}+\frac{\omega^2(k-1)k}{1+\omega^2k+\omega^2(k-1)k}e^{-s}\int_{0}^{s}\sin^{k-2}(\omega t^{\prime})e^{t^{\prime}}\mathrm{d}t^{\prime}\\
    =&\frac{\sin^{k}(\omega s)-k\omega\cos(\omega s)\sin^{k-1}(\omega s)}{1+\omega^2k+\omega^2(k-1)k}+\frac{\omega^2(k-1)k}{1+\omega^2k+\omega^2(k-1)k}A_{k-2}.
\end{align*}
We see that we have now expressed $A_k$ in terms of $A_{k-2}$, allowing us to calculate $A_k$ further using an iterative approach. Before we do so, we can simplify the above expression slightly further by noting that the $\sin^{k}(\omega s)$ always vanishes in the integral term $\int_{nT}^{(n+1/2)T}\cos(\omega s)\sin(\omega s)A_{k}\mathrm{d}s$. To tidy the notation we then introduce the terms
\begin{align*}
    B_{k}(\omega,s)=&\frac{-k\omega\cos(\omega s)\sin^{k-1}(\omega s)}{1+\omega^2k+\omega^2(k-1)k},\\
    C_{k}(\omega)=&\frac{\omega^2(k-1)k}{1+\omega^2k+\omega^2(k-1)k}.
\end{align*}
We can now find our final expression for $A_k$ as follows
\begin{align*}
    A_{k}=&\frac{-k\omega\cos(\omega s)\sin^{k-1}(\omega s)}{1+\omega^2k+\omega^2(k-1)k}+\frac{\omega^2(k-1)k}{1+\omega^2k+\omega^2(k-1)k}A_{k-2}\\
    =&B_{k}(\omega,s)+C_{k}(\omega)A_{k-2}\\
    =&B_{k}(\omega,s)+C_{k}(\omega)(B_{k-2}(\omega,s)+C_{k-2}(\omega)A_{k-4})=B_{k}(\omega,s)+B_{k-2}(\omega,s)C_{k}(\omega)+C_{k}(\omega)C_{k-2}(\omega)A_{k-4}\\
    =&\sum_{i=2}^{(k+1)/2}B_{2i-1}(\omega, s)\prod_{j=i+1}^{(k+1)/2}C_{2j-1}(\omega)+\prod_{n=1}^{(k-1)/2} C_{2n+1}(\omega)e^{-s}\int_{0}^{s}\sin(\omega t^{\prime})\mathrm{d}t^{\prime}\\
    =&\sum_{i=2}^{(k+1)/2}B_{2i-1}(\omega, s)\prod_{j=i+1}^{(k+1)/2}C_{2j-1}(\omega)+\frac{\omega e^{-s}-\omega\cos(\omega s)+\sin(\omega s)}{1+\omega^2}\prod_{n=1}^{(k-1)/2} C_{2n+1}(\omega).
\end{align*}
In the fourth line we used that $k$ is odd and thus the iterative aproach ends with a $\int_{0}^{s}\sin(\omega t^{\prime})\mathrm{d}t^{\prime}$ term (rather than a squared sine). As mentioned before, the $e^{-s}$ term is a transient that cancels out with our choice of $g(0)$ (and would otherwise vanish too), yielding our final expression for $A_k$ as follows
\begin{align*}
    A_{k}=&\sum_{i=2}^{(k+1)/2}B_{2i-1}(\omega, s)\prod_{j=i+1}^{(k+1)/2}C_{2j-1}(\omega)+\frac{-\omega\cos(\omega s)+\sin(\omega s)}{1+\omega^2}\prod_{n=1}^{(k-1)/2} C_{2n+1}(\omega).
\end{align*}
Returning to our expression for $H$, we now find
\begin{align*}
	H=&\left|\omega\sum_{k\in\mathds{O}}\alpha_{k}\left[\int_{nT}^{(n+1/2)T}\cos(\omega s)\sin(\omega s)A_{k}\mathrm{d}s-\int_{(n+1/2)T}^{(n+1)T}\cos(\omega s)\sin(\omega s)A_{k}\mathrm{d}s\right]\right|\\
    =&\left|-\omega\sum_{k\in\mathds{O}}\alpha_{k}\left[\sum_{i=2}^{(k+1)/2}\frac{4\sqrt{\pi}}{(1+2i)(1+(1-2i)^2\omega^2)}\frac{\Gamma\left(i\right)}{\Gamma\left(i-1/2\right)}\prod_{j=i+1}^{(k+1)/2}C_{2j-1}(\omega)+\frac{4}{3\left(1+\omega^2\right)}\prod_{n=1}^{(k-1)/2} C_{2n+1}(\omega)\right]\right|.
\end{align*}
To write the above expression a bit more concisely we introduce
\begin{align*}
    D_i(\omega)=\frac{4\sqrt{\pi}}{(1+2i)(1+(1-2i)^2\omega^2)}.
\end{align*}
This yields our final expression
\begin{align*}
	H=&\left|-\omega\sum_{k\in\mathds{O}}\alpha_{k}\left[\sum_{i=2}^{(k+1)/2}D_i(\omega)\frac{\Gamma\left(i\right)}{\Gamma\left(i-1/2\right)}\prod_{j=i+1}^{(k+1)/2}C_{2j-1}(\omega)+\frac{4}{3\left(1+\omega^2\right)}\prod_{n=1}^{(k-1)/2} C_{2n+1}(\omega)\right]\right|,
\end{align*}
which is Eq.~(\ref{eq:H}) presented in the main text.


%

\end{document}